\pdfoutput=1
\documentclass[sigconf,nonacm]{acmart}
\setcitestyle{numbers,sort&compress}
\usepackage{textcomp}
\usepackage{xspace}

\newcommand{\mypara}[1]{\par\addvspace{\smallskipamount}\noindent\textbf{#1.}}
\newcommand{\temph}[1]{\emph{#1}}
\newcommand{\remove}[1]{}

\newif\iffull
\fulltrue
\newcommand{\full}[2]{\iffull#1\else#2\fi}

\newcommand{\calV}{\mathcal{V}}
\newcommand{\calG}{\mathcal{G}}
\newcommand{\calT}{\mathcal{T}}

\newcommand{\ia}{\textit{i}}
\newcommand{\ib}{\textit{ii}}
\newcommand{\ic}{\textit{iii}}

\theoremstyle{acmdefinition}

\theoremstyle{acmplain}
\newtheorem*{theoremrestated}{Theorem~\ref{thm:soundness} (Soundness, restated)}
\newtheorem*{corollaryrestated}{Corollary~\ref{cor:run} (restated)}

\title[simulating people]{sGLP:  Simulating People with Stochastic Agents}

\author{Ehud Shapiro}
\affiliation{%
  \institution{London School of Economics and Weizmann Institute of Science}
  \country{UK and Israel }
}

\begin{document}

\begin{abstract}
Grassroots platforms are designed to be owned, operated and governed by their participants, and thus are constituted from person-to-person smartphone apps that are run by the participants.
A grassroots app is specified using a grassroots social contract, which is compiled through an abstraction cascade into a volition-guarded Grassroots Logic Program  (vGLP), a multiagent concurrent logic programming language. A vGLP program in turn specifies a smartphone UI via which people express their volitions and thus control the app.  Prudence dictates large-scale evaluation and testing of grassroots platforms prior to deployment, but using the many people needed to do that is a contradiction in terms. Hence another solution is needed.

Here, we present stochastic GLP (sGLP), designed to allow grassroots platforms to be simulated at scale from the vGLP programs that deploy them, with stochastic agents in place of people.  sGLP is GLP with a rate extension for specifying stochastic behaviour.  A stochastic agent is an sGLP program replacing both the vGLP UI and the person behind it.  It is typed by the dual of the interactive type of the UI and specifies a stochastic profile of a person's behaviour.

We demonstrate sGLP on the grassroots social graph, with a population of men and women drawn from four stochastic profiles --- homophile or indifferent, wary or sociable --- and on grassroots currencies, with four profiles --- spenders or savers, cautious or generous --- and report what the mix of profiles does to the social graph that forms and to the currencies that circulate.

We prove that every run of a vGLP program in which 
sGLP agents replace the UI and the people behind it is a vGLP run.  We implemented sGLP in GLP by Gillespie's method, and ran reproducible simulations with it. 

\end{abstract}

\ccsdesc{Theory of computation~Operational semantics}
\ccsdesc{Software and its engineering~Constraint and logic languages}
\ccsdesc{Software and its engineering~Concurrent programming languages}
\ccsdesc{Computing methodologies~Discrete-event simulation}
\ccsdesc{Computing methodologies~Agent / discrete models}
\ccsdesc{Mathematics of computing~Markov processes}

\keywords{Stochastic Simulation, Gillespie Algorithm, Agent-Based Simulation, Operational Semantics, Concurrent Logic Programming, GLP, Volitional Programming, Grassroots Platforms}

\maketitle

\section{Introduction}\label{sec:introduction}

Grassroots platforms~\cite{shapiro2023grassrootsBA,shapiro2025characterising,shapiro2024grassroots} are designed to be owned, operated and governed by their participants, and thus are constituted from person-to-person smartphone apps that are run by the participants.  A grassroots app is specified using a grassroots social contract~\cite{shapiro2026formalising}, which is compiled through an abstraction cascade~\cite{shapiro2026formalising,shapiro2026volition} into a volition-guarded Grassroots Logic Program (vGLP)~\cite{shapiro2026volition}, a multiagent concurrent logic programming language~\cite{shapiro1989family,shapiro2025glp,shapiro2026types}.  A vGLP program in turn specifies a smartphone UI via which people express their volitions~\cite{lewis2026volitional} and thus control the app.  In vGLP a procedure may put a question to its person, and a clause that reads the answer reduces only after the person has given it, so the transitions of the platform are guarded by the volitions of its people~\cite{lewis2026volitional}.   Each question has an interactive type, and the UI is derived from these types, a widget per question~\cite{shapiro2026volition}. The program is compiled onto GLP~\cite{shapiro2025glp,shapiro2026implementing} and deployed as a mini-app of the grassroots super-app~\cite{shapiro2026gsg}, on iPhone and Android~\cite{dart2024}.  

What such a platform does at scale --- whether a grassroots social graph~\cite{shapiro2023gsn} coalesces from separate groups into one, whether grassroots currencies~\cite{shapiro2024gc,shapiro2026bonds} circulate --- depends on what the people using it do, and cannot be found out by running the app with a handful of them.  Prudence dictates large-scale evaluation and testing of grassroots platforms prior to deployment, but using the many people needed to do that  is a contradiction in terms.  Hence another solution is needed.

Here, we present stochastic GLP (sGLP), designed to allow grassroots platforms to be simulated at scale from the vGLP programs that deploy them, with stochastic agents in place of people.  sGLP is GLP with a rate extension for specifying stochastic behaviour.  A rated goal is reduced at a time drawn, when it is spawned, from the exponential distribution with its rate; the machine runs to rest between two such times, so simulated time advances only with them; and the law of a run is that of Gillespie's first reaction method~\cite{gillespie1977exact}, which we prove equivalent here to the direct method, since a spawned goal is never withdrawn (Section~\ref{sec:sglp}).  A stochastic agent is an sGLP program replacing both the vGLP UI and the person behind it, typed by the dual of the interactive type specifying the UI, with a stochastic profile of a person's behaviour.  A profile is such a program, named, and a population is a mix of profiles from which each agent draws its person (Section~\ref{sec:simulation}).

We demonstrate sGLP on the grassroots social graph~\cite{shapiro2023gsn}, with a population of men and women drawn from four stochastic profiles --- homophile or indifferent, wary or sociable --- and on grassroots currencies~\cite{shapiro2024gc,shapiro2026bonds}, with four profiles --- spenders or savers, cautious or generous --- and report what the mix of profiles does to the social graph that forms and to the currencies that circulate (Sections~\ref{sec:example} and~\ref{sec:currencies}).  The social graph is the platform on which the other grassroots platforms are built, and the currencies are coins among friends in it.  Each demonstration runs 100 agents for a simulated year, over nine mixes of profiles and twenty seeds, and the social graph runs at 1,000 agents for five simulated years.

We prove that every run of a vGLP program in which  sGLP agents replace the UI and the people behind it is a vGLP run.  So what a simulation reports is something the deployed platform can do (Section~\ref{sec:simulation}).  We implemented sGLP in GLP by Gillespie's method~\cite{gillespie1977exact}, and ran reproducible simulations with it.  The implementation is as the stochastic pi-calculus~\cite{priami1995stochastic} was implemented in Flat Concurrent Prolog~\cite{regev2000representation,priami2001application,silverman1988logix}: a transformation of sGLP programs into GLP, and a monitor written in GLP that draws the next event by Gillespie's method over a guard that succeeds when the machine rests; every draw is from the run's seed, so a run is reproducible (Section~\ref{sec:implementation}).  Section~\ref{sec:related-work} reviews related work and Section~\ref{sec:conclusion} concludes.

\section{sGLP: Stochastic GLP}\label{sec:sglp}

A stochastic agent is a program that replaces the UI of a vGLP program and the person behind it: it holds the person's end of each question and answers it, as the widget does on the phone.  sGLP is the language such a program is written in: GLP~\cite{shapiro2025glp,shapiro2026types} with a rate on a goal, which runs in simulated time.  This section defines the language; Section~\ref{sec:simulation} puts a person written in it in the place of the person of a vGLP program.

\subsection{Syntax}

An sGLP program is a typed GLP program~\cite{shapiro2026types} with a rate on a goal.  Its draws are GLP's: \verb|random(|$S?$\verb|, |$N?$\verb|, |$K$\verb|, |$S_1$\verb|)| assigns $K$ an integer uniform from $1$ to $N$, drawn from the seed $S$, and $S_1$ the next seed~\cite{shapiro2025glp}.

{\small\begin{verbatim}
<rated_goal> ::= <goal> @ <rate>
<rate>       ::= <positive_real> / <time_unit>
<time_unit>  ::= second | minute | hour | day | week | year
\end{verbatim}}

\begin{definition}[Rated Goal]\label{def:rated}
A \temph{rated goal} $A$ \verb|@| $r$ in the body of a clause is a goal $A$ with a \temph{rate} $r$, a positive real per unit of time.
\end{definition}

A rated goal is spawned as any body goal is, and is reduced only once its time has come, which is drawn when it is spawned from the exponential distribution with its rate.

A choice among alternatives with given probabilities is a draw and a guard on it in each clause, as the profiles of Sections~\ref{sec:example} and~\ref{sec:currencies} show.

\subsection{Operational Semantics}

Throughout, $N$ is an sGLP program, $P\subset\Pi$ a finite set of agents, and $c$ a configuration of the maGLP transition system over $P$ and $N$~\cite{shapiro2025glp} (Definition~\ref{def:rc-glp}), whose transitions are Reduce, Communicate and Cold-call.  sGLP adds to a configuration a clock and the activation times of its rated goals, and to the transitions one that advances the clock.

\begin{definition}[Configuration, Pending, Quiescent]\label{def:sglp-config}
A \temph{configuration} of sGLP over $P$ and $N$ is a triple $(c,t,\alpha)$: $c$ a configuration of the maGLP transition system over $P$ and $N$, $t\ge0$ the \temph{simulated time}, and $\alpha$ a partial function from the goals of $c$ to \temph{activation times} $\alpha(A)\ge t$, the goals in its domain being \temph{pending}.  It is \temph{quiescent} if no Communicate, no Cold-call and no Reduce of a goal that is not pending is enabled in $c$.
\end{definition}

\begin{definition}[sGLP Transition System]\label{def:sglp-ts}
The \temph{sGLP transition system} over $P$, $N$ and a configuration $c^0$ of the maGLP transition system over $P$ and $N$ has the configurations of Definition~\ref{def:sglp-config}, the initial configuration $(c^0,0,\emptyset)$, and the transitions $(c,t,\alpha)\rightarrow(c',t',\alpha')$:
\begin{enumerate}
\item \textbf{Machine:} $c\rightarrow c'$ is a Communicate or Cold-call transition, or a Reduce of a goal $A$ that is not pending; $t'=t$; and $\alpha'$ is $\alpha$ with $\alpha'(B)=t+\tau$ for each rated goal $B$ \verb|@| $r$ in the body the Reduce spawns, $\tau$ drawn from the exponential distribution with rate $r$, the draws independent.
\item \textbf{Release:} $(c,t,\alpha)$ is quiescent, $A$ is a pending goal with $\alpha(A)$ least, $c'=c$, $t'=\alpha(A)$, and $\alpha'$ is $\alpha$ without $A$.
\end{enumerate}
\end{definition}

\begin{definition}[sGLP Run]\label{def:sglp-run}
An \temph{sGLP run} of $N$ over $P$ from $c^0$ is a sequence of configurations of the sGLP transition system over $P$, $N$ and $c^0$, the first initial and each a transition of the one before, such that its configurations of the maGLP transition system, with each repeated one taken once, form a proper safe computation of that system from $c^0$ (Definitions~\ref{def:rc-ts} and~\ref{def:rc-proper}, each configuration $(G,\sigma)$ read as $(\emptyset,G,\sigma)$); an \temph{sGLP run} of $N$ over $P$ is one from the initial configuration of the maGLP transition system over $P$ and $N$.
\end{definition}

Simulated time advances only at a Release: a machine step takes no time, and the clock jumps to the earliest activation time once the machine has nothing to do at the present time.  A released goal is an ordinary goal: where its input has not arrived it suspends and reduces when it does, with no further delay.

\begin{proposition}[Time to the Next Release]\label{prop:first-reaction}
Let $(c,t,\alpha)$ be a quiescent configuration of an sGLP run with pending goals $A_1,\ldots,A_m$ of rates $r_1,\ldots,r_m$, and $R=\sum_i r_i$.  Conditional on the run up to it, the next Release comes after a time exponentially distributed with rate $R$, and releases $A_i$ with probability $r_i/R$, independently of that time.
\end{proposition}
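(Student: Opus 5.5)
The plan is to reduce the statement to the memorylessness of the exponential distribution together with the standard facts about the minimum of independent exponentials. Since $(c,t,\alpha)$ is quiescent, no Machine transition is enabled, so the next transition is a Release. By Definition~\ref{def:sglp-ts} it releases the pending goal with least activation time and sets the clock to that time. The time to the next Release is therefore $\min_i \alpha(A_i)-t$, and the goal released is the $A_i$ attaining the minimum.

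First, I will fix the law of the residual times $\alpha(A_i)-t$, conditional on the run up to $(c,t,\alpha)$. Each $A_i$ was given $\alpha(A_i)=s_i+\tau_i$ by the Reduce that spawned it, at some clock value $s_i\le t$. Here $\tau_i$ is exponential with rate $r_i$, and the draws are independent of each other and of everything else drawn. Because no goal is ever removed from $\alpha$ except by its own Release, the only information the run up to $(c,t,\alpha)$ carries about $\tau_i$ is the following. It fixes $s_i$, and it shows that $A_i$ was not released at any earlier Release; that is, $s_i+\tau_i\ge t'$ for each clock value $t'$ reached, in particular $\tau_i\ge t-s_i$. The run also determines the clock values of earlier Releases, and these depend on other goals' draws, but those draws are independent of $\tau_i$.

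Second, I will apply memorylessness. Conditional on $\tau_i\ge t-s_i$ (and the independent rest), $\tau_i-(t-s_i)=\alpha(A_i)-t$ is exponential with rate $r_i$, and the residuals for distinct $i$ stay independent. A clean way to write this is to condition on the $\sigma$-algebra generated by the run prefix. The event that the prefix occurs is an intersection of events of the form $\{\tau_j > u_j\}$ for the pending $\tau_j$, intersected with events measurable in the already-released draws. Strictness versus non-strictness does not matter, since ties have probability zero. Given this, the conditional joint law factorises, and each factor is a shifted exponential.

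Third comes the standard computation for independent $E_i\sim\mathrm{Exp}(r_i)$. We have $\Pr(\min_i E_i>x,\ \arg\min=i)=\int_x^\infty r_ie^{-r_iy}\prod_{j\ne i}e^{-r_jy}\,dy=\frac{r_i}{R}e^{-Rx}$. This gives rate $R$ for the minimum, selection probability $r_i/R$, and independence of the two. Ties occur with probability zero, so "least" is a.s.\ unique.

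The main obstacle is the first step: making the conditioning rigorous. The run prefix includes Machine steps whose choices (which enabled transition is taken) may depend on scheduling. I would handle this by treating the scheduler as fixed or independent of the unrevealed residuals. The prefix up to a quiescent point is determined by the draws already revealed, plus the event that the unrevealed ones exceed the clock values reached. Since the definitions put no dependence on unrevealed activation times into any transition other than Release, this holds.
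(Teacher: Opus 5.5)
Your proposal is correct and follows the paper's own proof. Both argue that the prefix depends on each pending $\tau_i$ only through $\alpha(A_i)\ge t$, so each residual $\alpha(A_i)-t$ is independently exponential with rate $r_i$ by memorylessness, and then apply the standard law of the minimum of independent exponentials. Your explicit handling of the conditioning (revealed versus pending draws, and a scheduler that does not see unrevealed activation times) and your integral for the joint law of the minimum and its index spell out what the paper's three-sentence proof leaves implicit.
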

\begin{proof}
$\alpha(A_i)=t_i+\tau_i$, with $t_i\le t$ the time $A_i$ was spawned and $\tau_i$ exponential with rate $r_i$, the $\tau_i$ independent of one another and of every other draw of the run.  The run up to $(c,t,\alpha)$ depends on $\tau_i$ only through $\alpha(A_i)\ge t$, so conditional on it $\alpha(A_i)-t$ is exponential with rate $r_i$, by memorylessness, and the $\alpha(A_i)-t$ are independent.  The minimum of independent exponential variates with rates $r_1,\ldots,r_m$ is exponential with rate $R$, and is the $i$-th with probability $r_i/R$, independently of its value.
\end{proof}

This is Gillespie's first reaction method~\cite{gillespie1977exact}, and the proposition is its equivalence to the direct method, which holds here because a spawned goal is never withdrawn.

\section{Simulating a vGLP Program}\label{sec:simulation}

On the phone, the person's end of each interactive variable is held by a construct process, which has the question's widget drawn and assigns the person's answer; the widget's type is the dual of the interactive type, writing where the program reads and reading where the program writes~\cite{shapiro2026volition}.  In a simulation the person's end is held by a relay, which hands a copy of the question to a goal of a person procedure of the dual type and passes the goal's answer on to the program, and the population is a mix of profiles of person, each profile a program.  The vGLP program, its types and its clauses are those the phone runs.

{\small\begin{verbatim}
<person_declaration> ::= <type> =::= <procedure_name> .
<profile>     ::= person <name> .
                  <person_declarations> <program>
<run>         ::= run <integer> agents [ <mix_list> ]
                  until <time> seed <integer> .
<mix_list>    ::= <mix> | <mix> , <mix_list>
<mix>         ::= <dimension> ~ ( <name> : <probability> ; ... )
\end{verbatim}}

The definitions and results of vGLP~\cite{shapiro2026volition} that this section and Theorem~\ref{thm:soundness} rest on --- interactive types, the vmaGLP transition system, the canonical compilation and its mapping back to the source --- are recalled in Appendix~\ref{app:recalled}.  Throughout, $M$ is a vGLP program, $\calT(M)$ the set of its interactive types, each in writer or reader mode, and $\lceil M\rceil$ its canonical compilation (Definition~\ref{def:rc-compilation}), which gives the interactive types distinct functors, $A$ being the type of its asks.

\begin{definition}[Dual, Person Procedure, Person Declaration]\label{def:dual}
The \temph{dual} of an interactive type $T$ is $T^\circ := T'?$ where $T = T'$ and $T^\circ := T'$ where $T = T'?$; a \temph{person procedure} of $T$ in an sGLP program $N$ is a procedure $p$ of $N$ declared \verb|procedure| $p(T^\circ$\verb|, Integer?)|; and a \temph{person declaration} is a pair $(T, p)$, written $T$ \verb|=::=| $p$, of an interactive type $T \in \calT(M)$ and a person procedure $p$ of $T$.
\end{definition}

The person's end of an interactive variable is of the dual type, and a person procedure takes a copy of it, of the same type, with a seed.

\begin{definition}[Profile, Dimension]\label{def:profile}
A \temph{profile} over $M$ is a triple $\phi = (\mathit{name}, N, \Delta)$ of a name, an sGLP program $N$ and a finite set $\Delta$ of person declarations with distinct interactive types, each naming a person procedure of $N$, $\calT(\phi)$ being the set of those types; a \temph{dimension} over $M$ is a pair $(d, \Phi)$ of a name and a finite set $\Phi$ of profiles over $M$ with one set of interactive types, $\calT(\Phi)$; and dimensions $(d_1,\Phi_1),\ldots,(d_k,\Phi_k)$ over $M$ are \temph{disjoint and covering} if $\calT(\Phi_1),\ldots,\calT(\Phi_k)$ are pairwise disjoint with union $\calT(M)$.
\end{definition}

\begin{definition}[Stochastic Person, Population]\label{def:sglp-population}
A \temph{stochastic person} over disjoint and covering dimensions $(d_1,\Phi_1),\ldots,(d_k,\Phi_k)$ is a tuple $(\phi_1,\ldots,\phi_k)$ with $\phi_j \in \Phi_j$ for $j = 1,\ldots,k$, and a \temph{population} over $M$ is a tuple $(n, D, \mu, s)$ of an integer $n \ge 1$, disjoint and covering dimensions $D = ((d_1,\Phi_1),\ldots,(d_k,\Phi_k))$ over $M$, a tuple $\mu = (\mu_1,\ldots,\mu_k)$ of probability distributions, $\mu_j$ on $\Phi_j$, and an integer $s$, its \temph{seed}; its \temph{agents} are $P = \{1,\ldots,n\}$.
\end{definition}

A stochastic person therefore declares each interactive type of $M$ exactly once, by one of its profiles.

\begin{definition}[Well-Formed]\label{def:wellformed}
A population is \temph{well formed} if each person procedure its profiles declare is declared with the type its declaration requires, the predicates of the profiles' programs are distinct from those of $\lceil M\rceil$, and $\mu_j$ sums to one over $\Phi_j$ for each $j$.
\end{definition}

The transformation of Section~\ref{sec:implementation} refuses a population that is not well formed.  A run declaration draws from the population's seed, for each agent, a stochastic person --- one profile per dimension, by that dimension's distribution --- and a seed; Section~\ref{sec:implementation} says how.  For the rest of the section, $(n, D, \mu, s)$ is a well-formed population over $M$ with agents $P$ and, for each $a \in P$, $\pi(a)$ is a stochastic person over $D$, $s_a$ an integer and $g_a$ a goal of $M$, its initial goal, with $g'_a$ the goal $g_a$ extended with an ask stream $U_a$ and with its calls of volitional procedures replaced, as in $\lceil M\rceil$.

\begin{definition}[Copy, Relay]\label{def:relay}
For $T \in \calT(M)$, $Q$ the type of the asks' second arguments, the union over the functors $\lceil M\rceil$ gives the interactive types of $M$, $t$ the one it gives $T$ (Definition~\ref{def:rc-compilation}), and $\mathit{relay}_T$ a procedure declared \verb|procedure| $\mathit{relay}_T(Q?$\verb|, |$T$\verb|)|, a goal $\mathit{relay}_T(R, Y)$, with $R$ the term $t(E)$ of an ask, or a reader assigned it, the ask \verb|ask(|$T$\verb|, |$t$\verb|(|$E$\verb|))|, $E$ the person's end of its interactive variable, and $Y$ one end of a fresh variable pair, the \temph{copy}, whose other end is of type $T^\circ$, is a \temph{relay} if each reduction of it and of the goals descended from it (\ia) assigns a variable \temph{received} through $R$ --- a variable of $t(E)$ or of a term assigned to a variable so received, or the pair of one --- only where it is a writer at a position the person writes, and then either the reader of a fresh writer, which \temph{takes the place} of the writer assigned, or its \temph{answer}, at most one answer: where the writer's place in the copy holds a term ground but for variables at its stream and list positions and at the positions the program writes, a list cell or \verb|[]| where the writer is of a stream or list type, that term with a fresh reader at each stream and list position the person writes, but where the copy ends a list there with \verb|[]|, and a fresh writer at each outermost position the program writes; (\ib) puts no variable received through $R$ in a term it assigns to a variable of the copy --- $Y$, a variable of a term assigned to a variable of the copy, or the pair of one; and (\ic) passes on to the goals it spawns each variable received through $R$ that its goal holds and does not assign, and the writer of each fresh reader and the reader of each fresh writer it assigns a received variable with.
\end{definition}

A relay holds the person's end of a question, shows the copy the program's output, and passes the copy's answer on to the program once it is complete, as a construct process forms an answer from the person's input; an answer of a stream or a list type is passed on one element at a time.  A writer passes from goal to goal of a relay as GLP passes any writer on, by the reader of a fresh writer that takes its place.

\begin{definition}[Person Process]\label{def:person-process}
The \temph{person process} of a stochastic person $\phi$ over $D$ is the procedure $\mathit{person}_\phi$, written \verb|person_phi| and declared \verb|procedure person_phi(Stream(|$A$\verb|)?, Integer?)|, with the clause \verb|person_phi([], _).| and, for each person declaration $(T, p)$ of a profile of $\phi$, the clause
\begin{verbatim}
person_phi([ask(T, R)|As], S) :-
    random(S?, 2147483646, K, S1),
    p(Y?, K?), relay_T(R?, Y), person_phi(As?, S1?).
\end{verbatim}
with \verb|Y| and \verb|Y?| exchanged where $T$ is in reader mode, $\mathit{relay}_T$ being a procedure whose goals $\mathit{relay}_T(R, Y)$ are relays (Definition~\ref{def:relay}), its predicate and those of the procedures it reaches distinct from those of $\lceil M\rceil$ and of the profiles' programs.
\end{definition}

The end the ask carries is the writer where $T$ is in reader mode and the reader where it is in writer mode (Definition~\ref{def:rc-compilation}).  On each ask the person process hands the ask's term, which holds the person's end of the question, to the relay it spawns, and the other end of the copy to the person procedure its profile declares for the question's type, with a seed drawn from its own.

\begin{definition}[Simulation Program]\label{def:simulation-program}
The \temph{simulation program} of $(M, P, \pi)$ is the sGLP program $N$ whose clauses are those of $\lceil M\rceil$ other than the construct processes' and the dispatcher's, those of the programs of the profiles of $\pi(a)$ for every $a \in P$, and those of $\mathit{person}_{\pi(a)}$ for every $a \in P$, with the procedures $\mathit{relay}_T$ they call and the procedures these reach, and whose \temph{initial configuration} $c^0$ has at each $a \in P$ the resolvent $(\{g'_a,\ \mathit{person}_{\pi(a)}(U_a?, s_a)\}, \emptyset)$.
\end{definition}

A \emph{stochastic agent} is an agent of the sGLP transition system over $P$, $N$ and $c^0$: it runs the platform's program as the phone does, with the person process, the relays and the programs of its stochastic person in place of the dispatcher, the construct processes and the person behind them.  A run declaration \verb|run| $n$ \verb|agents [|$\ldots$\verb|] until| $T$ \verb|seed| $s$ runs the simulation program of $M$ over agents $1,\ldots,n$ with the stochastic persons and seeds its population draws from $s$, until the simulated clock reaches $T$; the initial goal of each agent is stated with the program, as Sections~\ref{sec:example} and~\ref{sec:currencies} state theirs.  Every draw of a person is from a seed derived from the population's seed, so the draws of a run are reproducible from that seed, and the run itself is reproducible where the machine's schedule is deterministic.  For the rest of the section, $r = (c^1,t_1,\alpha_1), (c^2,t_2,\alpha_2), \ldots$ is an sGLP run of $N$ over $P$ from $c^0$.

\begin{definition}[Platform Goal, Person Goal, Interactive Variable, Interface Variable, Ask-Stream Variable, Question of a Writer, Open]\label{def:sglp-goals}
A goal in a resolvent of $r$ is a \temph{person goal} if it is the goal $\mathit{person}_{\pi(a)}(U_a?, s_a)$ of $c^0$, for some $a \in P$, or a goal a Reduce of a person goal spawns, and a \temph{platform goal} otherwise; the \temph{interactive variables} of an agent $a$ in $r$ are the variables $X$ of the asks \verb|ask(|$T$\verb|, |$t$\verb|(|$X$\verb|))| written by the reductions of asking clauses at $a$ in $r$; the \temph{interface variables} $I_a$ of $a$ in $r$ form the least set closed under pairs that contains the interactive variables of $a$ and the variables of every term assigned, by a transition $c^i \rightarrow c^{i+1}$ of $r$, to a writer $X \in I_a$ open in $c^i$; the \temph{ask-stream variables} of $a$ are the variables at the ask-stream argument of a platform goal of $a$ and at the arguments of the \verb|merge| goals the compilation adds, in a configuration of $r$, the variables a \verb|merge| clause binds to the elements it moves, and the tail variable of each list assigned to one of them; the \temph{question} of a writer of $I_a$ is the question of the writer whose place it takes (Definition~\ref{def:relay}), where it takes one, and the writer itself otherwise; and a writer is \temph{open} in a configuration if its reader occurs there other than in the assignment by which the writer takes another's place, or the writer whose place it takes is open there.
\end{definition}

\begin{definition}[Log]\label{def:sglp-log}
For $i \ge 1$, let $\theta_i$ be the restriction to $I_a$ of the assignments the transition $c^i \rightarrow c^{i+1}$ adds at $a$ but for those by which a writer takes another's place, each read as an assignment to the question of its writer, the identity substitution where it adds none; the transition is \temph{logged} if it is a Reduce of a person goal of $a$ and $\theta_i$ is not the identity, and its \temph{entries} are the triples $(t_i, a, s\{X := v\})$ for the assignments $X? := v$ of $\theta_i$, $s$ being the term in which $X$ was handed to the person --- the ask's term where $X$ is an interactive variable, the term a platform goal's Reduce at $a$ assigned to a variable of $I_a$ where $X$ is among that term's variables, and $X$ itself where a person goal introduced it --- and every variable of $s\{X := v\}$ replaced by the anonymous variable; and the \temph{log} of $r$ is the sequence of the entries of its logged transitions, in the order of $r$.
\end{definition}

An entry of the log is an answer with the question it answers, as the person was shown it, the positions still open anonymous, so the log is what the people were asked and what they answered and nothing else, one entry per answer, an answer of a stream type one element at a time; a run adds nothing to the program it runs.  Reading the logs and analysing the runs --- counting, aggregating, comparing populations, plotting --- is separate software, outside the language and outside this paper.

A simulated run is a run of the platform.  The mapping that says so reads the platform's configuration off the platform goals, as vGLP's compilation mapping (Definition~\ref{def:rc-sigma}) reads it off the compiled goals, with the person goals in the place of the construct processes and the dispatcher.

\begin{definition}[Simulation Mapping]\label{def:simulation-sigma}
The \temph{simulation mapping} $\sigma$ of $r$ takes a configuration $(c,t,\alpha)$ of $r$, with $\sigma_a$ the readers substitution of $a$ in $c$, to the configuration of the vmaGLP transition system over $P$ and $M$ with, at each agent $a \in P$, the agent state $(H, G, \sigma_r)$:
\begin{itemize}
\item $G$ = the multiset of the platform goals of $a$ in $c$, each read back to its source with its ask stream dropped: $q(S_1,\ldots,S_n,D)$ as the $n$-ary goal $q(S_1,\ldots,S_n)$, $q_1(S_1,\ldots,S_n,X?,D)$, and\\ $q_1(S_1,\ldots,S_n,X,D)$ as the asked goal, with the identifier of the goal its asking clause reduced, and any other platform goal as itself; the \verb|merge| goals the compilation adds and the person goals contributing nothing; and in every goal an asking clause's pair for an argument read as that argument, as in Definition~\ref{def:rc-sigma}, and a writer of $I_a$ and its reader read as its question and its question's reader (Definition~\ref{def:sglp-goals});
\item $H$ = the variables of $I_a$ that occur in a person goal or a \verb|merge| goal the compilation adds of $a$ in $c$, in a term $\sigma_a$ assigns to an ask-stream variable, or in a term of an assignment other than an answer that a Reduce of a person goal added to $\sigma_a$ for a reader a person goal holds, read as in $G$, each with its type, a writer where the person writes and a reader where the program writes; and the assignments to variables of $I_a$ that the Communicate transitions of $r$ up to $(c,t,\alpha)$ delivered to person goals of $a$, shown;
\item $\sigma_r$ = the restriction of $\sigma_a$ to the assignments added by a Reduce of a platform goal of $a$ to variables other than ask-stream variables and the variables of the pairs by which an asking clause passes its arguments on, an asking clause's assignment $W? := S_l?$ of an output position read through as in Definition~\ref{def:rc-sigma}, and to the answers a Reduce of a person goal of $a$ added to writers of $I_a$ open in the configuration it reduced, read as in $G$.
\end{itemize}
\end{definition}

\newcommand{\soundnesssetting}{Let $r$ be an sGLP run of the simulation program $N$ of $(M,P,\pi)$ over $P$ from its initial configuration $c^0$, $\pi$ over a well-formed population, and $\sigma$ the simulation mapping of $r$.}
\newcommand{\soundnessstatement}{$\sigma(r)$ is a proper safe computation of the vmaGLP transition system over $P$ and $M$ from $\sigma(c^0)$ in which every Answer and every Dismiss is the image of a Reduce of a person goal.}
\newcommand{\runstatement}{Where every $g_a$ is the initial goal\\ \texttt{agent(ch(\_?,\_),ch(\_?,\_))} of $M$, $\sigma(r)$ is a proper safe run of the vmaGLP transition system over $P$ and $M$.}
\soundnesssetting
\begin{theorem}[Soundness]\label{thm:soundness}
\soundnessstatement
\end{theorem}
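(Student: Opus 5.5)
The proof will be a step-by-step simulation argument: induction on the length of $r$, mapping each sGLP transition either to a stutter (identity) step of the vmaGLP system or to exactly one vmaGLP transition. The aim is to show that $\sigma(r)$, with repeated configurations taken once, is a computation from $\sigma(c^0)$. Properness and safety will then be transferred from those of the underlying maGLP computation, which Definition~\ref{def:sglp-run} requires. The model is vGLP's own correctness argument for the canonical compilation $\lceil M\rceil$ and its mapping back (Definition~\ref{def:rc-sigma}, recalled in Appendix~\ref{app:recalled}). We reuse it verbatim for all platform goals and redo only the part concerning the construct processes and the dispatcher, which the person goals now replace.

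First, Release transitions change only $t$ and $\alpha$, so $\sigma$ maps them to stutters. The base case is that $\sigma(c^0)$ is the vmaGLP initial configuration: $g'_a$ reads back to $g_a$ and the person goal contributes nothing to $G$. For Machine transitions, split by which goals are involved.

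\begin{enumerate}
\item A Reduce of a platform goal other than an asking clause, or a \verb|merge| step, is handled exactly as in the compilation lemma: it maps to the corresponding vmaGLP Reduce, or to a stutter for \verb|merge| and pair-passing.
\item A Reduce of an asking clause writes an ask on the ask stream. It maps to the vmaGLP Ask transition, which puts the interactive variable into $H$ with its type. This works because $H$ is defined to include variables of $I_a$ occurring in terms that $\sigma_a$ assigns to ask-stream variables.
\item A Communicate delivering to a person goal maps to the vmaGLP transition showing that assignment in $H$, or to a stutter. One delivering between platform goals maps as in the compilation lemma.
\item A Reduce of a person goal is the heart of the theorem. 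If it adds no answer to an open writer of $I_a$, it must be a stutter: it may rename writers by taking their place, spawn relay, person-procedure or profile goals, or assign variables of the copy. Distinct predicates (well-formedness, Definition~\ref{def:wellformed}, and Definition~\ref{def:person-process}) guarantee that profile and person code never touches platform variables except through the relay. Relay condition (\ib) guarantees that nothing received leaks into the copy, and condition (\ic) guarantees that received variables are only ever passed on, never dropped or duplicated. If the reduction does add an answer to an open writer, relay condition (\ia) guarantees that the answer has exactly the form the vmaGLP Answer rule admits for the question's type: ground except for fresh readers at the stream and list positions the person writes, and fresh writers at the positions the program writes, or \verb|[]| closing a list. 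In that case the reduction maps to an Answer, or to a Dismiss when the answer is the dismissal form, on the question of that writer, read via ``takes the place of''. This gives the clause that every Answer and Dismiss is the image of a person-goal Reduce, since no other transition type changes the answer part of $\sigma_r$.
\end{enumerate}

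The main obstacle is the bookkeeping of case 4. We must show that reading each writer of $I_a$ as its question makes $\sigma$ well defined and keeps the single-assignment invariant: each question is answered at most once, and only while it is open. Safety of the maGLP run gives single assignment of writers in $N$. The chain of place-taking writers then has to be shown to map injectively onto questions, which is an induction along the chain using (\ia) and (\ic). I would isolate this as a lemma: in every configuration of $r$, each question of $I_a$ has at most one open writer representative, and that representative is held by a person goal. With the lemma in hand, the Answer preconditions follow.

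Properness transfers last. A vmaGLP transition enabled forever in $\sigma(r)$ is the image of a maGLP transition that is enabled forever, or, for an Answer, is not required by properness because it is volitional. So fairness of the maGLP computation yields properness of $\sigma(r)$, as in the compilation theorem.
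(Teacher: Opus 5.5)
Your safety skeleton matches the paper's: Release and \texttt{merge} steps as stutters, the asking clause as Ask, platform steps through Lemma~\ref{lem:rc-correspondence}, and a relay's answer to an open writer as Answer. Two steps fail as written, however.

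\textbf{Dismiss is in the wrong branch of your case 4.} In vmaGLP a Dismiss removes from $H_p$ a writer whose reader occurs nowhere. It is never the image of an assignment to an open writer, and there is no ``dismissal form'' of an answer: a relay's assignment to an open writer of $I_a$ always maps to an Answer.

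Conversely, a relay reduction that assigns a writer $X\in I_a$ that is not open, or hands its place to a fresh writer, is not a stutter as you claim. $I_a$ is closed only under terms assigned to \emph{open} writers (Definition~\ref{def:sglp-goals}). So neither the variables of the term nor the writer taking the place enter $I_a$, and $X$ leaves $H$ with nothing added, $G$ and $\sigma_r$ unchanged. That is exactly the Dismiss, and the paper obtains it from these two subcases: the last sentence of its case (3) and the non-open half of its case (4). It happens whenever the program drops a question's reader before the person goal's answer reaches the relay.

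For the same reason, your argument for the clause about Dismiss (``no other transition changes the answer part of $\sigma_r$'') cannot work, since a Dismiss does not touch $\sigma_r$. You need the case analysis itself to show that only these relay reductions remove a writer of $I_a$ from $H$ other than by an Answer.

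\textbf{Your last paragraph proves the wrong property.} Properness (Definition~\ref{def:rc-proper}) is a freshness condition, not fairness. A variable that newly occurs in $G_p$ must occur in no earlier $G_p$ or $H_p$, and each renamed clause must be apart from every agent's goals, held variables and readers substitution. Transitions ``enabled forever'' have no bearing on this, and an sGLP run is not assumed live in any case.

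The missing argument runs as follows. A variable new in the source goals of $a$ occurs in a non-\texttt{merge} platform goal of $c^{i+1}$ and in no goal of $a$ in $c^i$. This holds because the variables of an ask reach only \texttt{merge} goals, the person process and relays. It also holds because variables held by person goals reach no platform goal: an answer's variables are fresh, and a place-taking writer is read as its question. By the properness of the maGLP computation the variable then occurs in no earlier goal of $a$. Hence it occurs in no earlier $H$, since the variables of $H$ occur in goals of $a$ or in terms $\sigma_a$ assigns to ask-stream variables or to readers held by person goals. Renaming apart from held variables follows the same way.

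A minor point: $\sigma(c^0)$ is the initial vmaGLP configuration only when $g_a$ is the \texttt{agent} goal. That is the Corollary; the theorem needs only a computation from $\sigma(c^0)$.
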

\begin{corollary}\label{cor:run}
\runstatement
\end{corollary}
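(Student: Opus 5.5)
The plan is to prove Theorem~\ref{thm:soundness} by a step-by-step simulation argument that follows the soundness proof of vGLP's canonical compilation (the mapping of Definition~\ref{def:rc-sigma}). We induct along $r$ and show that each transition of $r$ is mapped by $\sigma$ either to a stutter, where $\sigma$ of the two configurations coincide, or to exactly one vmaGLP transition. The person goals take the place of the construct processes and the dispatcher. First we discard time. A Release changes only $t$ and $\alpha$, so $\sigma$ maps it to a stutter. By Definition~\ref{def:sglp-run}, the maGLP configurations of $r$, with repeats removed, form a proper safe maGLP computation of $N$ from $c^0$. It therefore suffices to classify the Machine transitions, that is, maGLP transitions of $N$, by the goal reduced or the variable communicated. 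The base case is immediate: $\sigma(c^0)$ has $g_a$ read back from $g'_a$, empty $H$ and empty $\sigma_r$, which is the vmaGLP initial state.

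The case analysis runs as follows. (1) A Reduce of a platform goal that is not an asking clause uses the same clause as in $\lceil M\rceil$. Because well-formedness makes the predicates of the profiles and relays distinct from those of $\lceil M\rceil$, a platform goal can only ever use clauses of $\lceil M\rceil$. The compilation lemma then maps it to the corresponding vmaGLP Reduce, or to a stutter for \verb|merge| goals and pass-through pairs. (2) A Reduce by an asking clause maps to the vmaGLP Ask. The new interactive variable enters $I_a$, and hence $H$, exactly as the dispatcher's construct process would hold it. (3) A Reduce of a person goal falls into two sub-cases. If $\theta_i$ is the identity, the reduction is internal to the person, the relay, or the profile program. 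By clause (\ib) of Definition~\ref{def:relay}, no received variable leaks into the copy. By clause (\ic), taking the place of a writer only renames questions. So $\sigma$ is unchanged and the step is a stutter. If $\theta_i$ is not the identity, clause (\ia) guarantees a single answer to an open writer of $I_a$, and that answer has exactly the shape the construct process would form: ground except at program-written positions and stream or list positions, with fresh readers where the person writes and fresh writers where the program writes. We map this step to Answer, or to Dismiss where the copy ends a list with \verb|[]|, checking the vmaGLP preconditions against $H$. This case also yields the clause of the statement that every Answer and Dismiss is the image of a person-goal Reduce, because no other case produces one. (4) A Communicate or Cold-call of a platform variable maps to the same vmaGLP transition. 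One delivering an $I_a$ assignment to a person goal is recorded in $H$ as shown, matching the vmaGLP step in which the UI displays the program's output.

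Safety and properness then carry over. Single-occurrence and single-writer invariants are preserved because relays assign only received writers, and each at most once. Properness (fairness) follows from properness of $r$. A platform goal enabled in $\sigma(c)$ is enabled in $c$ as well, since the relays forward answers and pass received variables on, so the fairness of $r$ yields the matching vmaGLP step. For Corollary~\ref{cor:run}, we apply the theorem and note that $\sigma(c^0)$ is the vmaGLP initial configuration for the initial goal \verb|agent(ch(_?,_),ch(_?,_))|.

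The main obstacle is case (3). We must show that the answer assembled through a chain of relay goals, each taking the place of a writer and splitting the answer into elements, corresponds to a well-typed vmaGLP Answer to the right question, namely the question of the writer (Definition~\ref{def:sglp-goals}), and that it does so exactly once. I would handle this with an invariant, proved by induction over $r$, that each writer of $I_a$ held by a person goal has a unique open question, and that the relay subtree's assignments project onto partial answers of the dual type $T^\circ$. Type preservation of typed GLP gives the well-typedness needed for this invariant.
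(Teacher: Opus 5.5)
Your derivation of the corollary is correct and is the paper's route. You cite Theorem~\ref{thm:soundness} and check that $\sigma(c^0)$ has at each agent the state $(\emptyset,\{g_a\},\emptyset)$: $g'_a$ reads back as $g_a$ with its ask stream dropped, $H=\emptyset$ because the person goal $\mathit{person}_{\pi(a)}(U_a?,s_a)$ holds no variable of $I_a$ (the one justification your ``empty $H$'' leaves unsaid), and $\sigma_r=\emptyset$ because $\sigma_a=\emptyset$. When $g_a$ is \texttt{agent(ch(\_?,\_),ch(\_?,\_))} this is the vmaGLP initial configuration, so the proper safe computation the theorem gives is a proper safe run.

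The rest of your proposal re-proves the theorem, which the corollary may simply cite. If you rely on your own version, two points in it are wrong.

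First, ending a list with \texttt{[]} is not a Dismiss; it is an Answer whose term $t$ is \texttt{[]} (Definition~\ref{def:rc-vmaglp}). A Dismiss is a relay reduction on a writer of $I_a$ that is not open in $c^i$ (Definition~\ref{def:sglp-goals}). This covers two cases: the relay assigns that writer its answer, which $\sigma_r$ then does not take, or it hands the writer's place on. The second case has $\theta_i$ the identity, yet it is not a stutter: the fresh writer falls outside $I_a$, so $H$ loses the question. Clause~(\ia) gives you at most one answer, not an open writer, so your case split has to be on openness.

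Second, properness (Definition~\ref{def:rc-proper}) means freshness of variables and renaming apart, not fairness. Liveness is claimed by neither statement and is not available for an sGLP run. What must be shown is that a variable newly occurring in a source goal occurs newly in a platform goal, and hence, by the properness of the maGLP computation, in no goal and no $H$ earlier. You must also show that each source Reduce's clause is renamed apart because its compiled counterpart is.
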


The proofs are in Appendix~\ref{app:proofs}.  What a simulation reports is therefore something the deployed platform can do, with the people's answers given by the stochastic agents: a person procedure's goal computes the answer into its copy of the question, and the relay assigns the program's variable with it.  Simulated time advances only at a Release, and $\lceil M\rceil$ has no rated goal, so it advances only with the persons' acts.  The converse --- that every run of the platform is the image of a simulated run under some population --- is not claimed: a simulation shows what its profiles do.

\section{Implementation}\label{sec:implementation}

sGLP is implemented in GLP by Gillespie's method~\cite{gillespie1977exact}, as the stochastic pi-calculus was implemented in Flat Concurrent Prolog~\cite{regev2000representation,priami2001application,silverman1988logix}: by a transformation of sGLP programs into GLP, a monitor process written in GLP, and one thing from the machine, a guard that succeeds when it has come to rest.  The engine, the compiler and the type checker of GLP~\cite{shapiro2026implementing} are not changed.

\mypara{The guard}  A machine step carries no simulated time, so the monitor may draw only when no transition of the machine is possible; with no way to learn that, it cannot know when to draw.  GLP has the guard \verb|when_idle|, which suspends while the machine has a Reduce or a Communicate to make and succeeds when it has none~\cite{shapiro2025glp}.  It is sound for one machine and not for many --- in a multiagent system a communication changes an agent's configuration without that agent taking a transition --- which is why the simulator hosts the whole population in one machine.  SpiFcp's monitor did the same, waiting on the Logix system's \verb|idle_wait| before its next draw~\cite{silverman1988logix}.

\mypara{The monitor}  The monitor is Gillespie's algorithm~\cite{gillespie1977exact} written in GLP, in its first reaction form (Section~\ref{sec:sglp}): a GLP goal holding the simulated clock, the pending requests in a queue by activation time, the run's seed and its horizon.  Its input is one stream, a mutual reference of GLP's into which every person goal appends; a request \verb|rated(R, Tok)| asks for a goal of rate \verb|R| to be released, and the monitor assigns it an activation time, the clock plus $-\ln U/R$ with $U = K/(2^{31}-1)$ and $K$ drawn by \verb|random| from its seed, and queues it.  Its release clause is guarded by \verb|when_idle|: when the machine rests, every request sent has arrived, so the queue is that of the configuration; the monitor advances the clock to the least activation time and binds that request's \verb|Tok|, which is Release.  It stops when the least activation time is past the horizon, binding every pending token to \verb|stop|, or when the queue is empty at rest, and writes the clock.  The clock counts simulated seconds.  The module, the implementation of the core of Gillespie's algorithm, is Appendix~\ref{app:monitor}.

\mypara{The transformation}  A profile is translated into GLP: its person procedures take a third argument, the monitor's reference, which the procedures they call that hold rated goals take as well; and a rated goal \verb|G @ r| becomes the request \verb|rated(r, Tok)| appended to that reference, followed by \verb|G| with \verb|Tok?| as an added first argument, matched by a clause that takes the goal only once the monitor has bound the token to \verb|go| and by one that ends it once the monitor has bound it to \verb|stop|, a rated goal's procedure taking inputs only, which the transformation checks; a rated goal that writes an answer receives its writer inside a term it reads.  A person declaration \verb|T =::= p| becomes a clause of the person process, and the run declaration becomes the initial goals of the run.  The transformation is a GLP program: it reads an sGLP program in a term form, one GLP term per declaration and clause, checks that its population is well formed (Section~\ref{sec:simulation}), and prints GLP source, as the compiler of grassroots social contracts reads contracts~\cite{shapiro2026formalising}; the syntax printed in this paper is the surface form of that term form.  The profiles and populations of Sections~\ref{sec:example} and~\ref{sec:currencies} are translated by it; the vGLP program's compilation and the harness of a run are written by hand.  GLP's own \verb|Goal@Agent|, which places a goal at an agent in a boot clause, is untouched.

\mypara{The person process}  In vGLP's canonical compilation an asking clause writes an ask --- the question's interactive type and the person's end of its variable --- on an ask stream that the compiler threads through every procedure that reaches a question, and the dispatcher, reading the initial goal's ask stream, spawns on each ask the construct process that has the bridge draw the widget~\cite{shapiro2026volition}.  In a simulation the dispatcher's place is taken by a GLP process, the \emph{person process} of the agent: it holds the agent's stochastic person, one profile per dimension, and a seed, and on each ask it spawns the goal $p(Y, S, \mathit{Mon})$ of the person procedure $p$ that its profile declares for the type, with $Y$ the person goal's end of a copy of the question, $S$ a seed it draws from its own generator, a value with the full range, and $\mathit{Mon}$ the monitor's reference, and the relay of the question (Definition~\ref{def:relay}), its answer and present procedures, by the question's moded type.  The person goal answers into its copy, and the relay hands the program's variable the answer once the positions the person writes are ground but for those of a stream or a list, which it hands over one element at a time, with fresh writers at the outermost positions the program writes, whose readers it keeps, showing the person goal the program's terms there the same way --- vGLP's Answer, which waits on nothing the program writes --- and that is how the person process comes to log it.  The other way, the relay shows the person goal the question once the positions the program writes are ground, the positions the person writes inside them left open, each answered as a question of its own --- vGLP's Communicate of the program's output to the person.  Every term crossing between the program and the person goal, in either direction and at every depth, crosses the relay so, a question the program puts inside an answer included.  The platform's program compiles as for the phone.

\mypara{The run}  The population runs as one program in one machine, and an agent is an attribution: its initial goal as the program states it, with the agent's number and the lists it is created with, its person process, and what they spawn.  The agents are numbered $1,\ldots,n$.  The harness's goals are the network between the agents, a mutual-write merge of their output streams and a tree of routing goals delivering each message to the input stream of the agent it names, and the monitor.  Every draw of the run is by \verb|random| from the run's seed: the monitor draws from it directly, and the harness draws each agent's profiles and its person process's seed from a separate block of a second generator started from it, so that the agents' draws are independent streams and the run is reproducible where the machine's schedule is deterministic, as GLP's is.

\mypara{The log}  The person process writes the log of its agent: one line for each answer a relay hands the program, with the simulated time, which it asks of the monitor, the agent, the interactive type and the term handed over, the positions the program writes as variables.  Everything the log records crosses a relay, and every answer does.  The harness names the log's sink when the run starts.

The implementation decisions beyond Sections~\ref{sec:sglp} and~\ref{sec:simulation} are in Appendix~\ref{app:notes}.  We claim, informally, that a run of the transformed program, up to the monitor's stop, is an sGLP run of the simulation program: the transformed clauses reduce as the source clauses do, a request's token is bound only by the monitor, only when \verb|when_idle| has succeeded, which is quiescence, and only for the least activation time, which is Release; and the monitor's reductions, the harness's, and the person process's that ask the clock and write the log are stutters.

\mypara{The claim, tested}  Two tests in the repository check the claim where it bites.  With a rated goal and an unrated goal that runs on, the unrated goal runs to rest before the rated one is released, whichever is spawned first: \verb|when_idle| succeeds at quiescence and not before.  With four pending requests of rates 1/second, 2/second, 30/minute and 3600/hour and nothing else, the first 10,000 times between releases have mean $0.2206$ s and variance $0.04934$ s$^2$, against $1/4.5 = 0.2222$ and $1/4.5^2 = 0.04938$ of Proposition~\ref{prop:first-reaction}, $0.7\%$ and $0.08\%$ under.

\mypara{The cost}  Measured on one machine, the social graph of Section~\ref{sec:example} at its mix: 100 agents for a year is 14,459 answers in 9.1 s of wall-clock time, 1,596 releases a second, at 80 MiB; 30 days at 100, 200 and 500 agents cost 29, 32 and 46 ms an agent and 2.3, 2.6 and 3.8 ms an answer; and 1,000 agents for five years is 740,622 answers in 2,291 s, 323 releases a second from start to end, at 555 MiB at the peak, the memory flat throughout.  The draw of a population of 100 costs about 2 s besides.

\mypara{How it was built}  The implementation --- the monitor, the person process, the two translations, the harnesses, the replays and the tests --- was written by Claude (Anthropic) from this paper, by the Human--Science--AI methodology~\cite{shapiro2026formalising,shapiro2026implementing}: the paper is the source of authority for coding, and every gap coding found was repaired in the paper before it was repaired in the code; the runs' analyses and the figures were made the same way.

\section{Example: The Grassroots Social Graph}\label{sec:example}

The platform's agent offers friendship to a peer its person picks, and puts each offer that arrives to its person as a card.  The agents are numbered $1$ to $n$; the first half are male and the rest female, and each agent is created with the list of the other agents of its sex and the list of those of the other sex, and starts by asking its person the first menu, by its exported procedure \verb|start|.

{\small\begin{verbatim}
Peer    ::= Constant.
YesNo   ::= yes ; no.
Peers   ::= [Peer | List(Peer)].
Choice  ::= same(Peer) ; other(Peer).
Menu    ::= menu(Peers, Peers, Choice?).
Card    ::= card(Peer, YesNo?).

%% Show the person both lists, and hand them the choice.
procedure (Menu)*ask(Peers?, Peers?, Choice).
(menu(Ss?, Os?, C))*ask(Ss, Os, C?).

%% Show the person who offers, and hand them yes or no.
procedure (Card)*respond(Peer?, YesNo).
(card(From?, A))*respond(From, A?).
\end{verbatim}}

The agent's own question is \verb|Menu|, in writer mode: \verb|ask| writes the two lists at once and hands the person the choice, which stays open while offers come and go, and each offer the person makes is followed by a new menu.  An arriving offer is \verb|Card|, also in writer mode: \verb|respond| shows the person who offers and hands them yes or no, and the reply carries their answer back when they give it.  The agent's clauses, \full{in the repository\footnote{\url{https://github.com/EShapiro2/GLP}, directory \texttt{programs/sglp}.}}{in the supplementary material}, send an offer to the peer the person picks and ask again, put each offer that arrives to the person and reply with their answer, and drop the replies to their own offers; both interactive types are in writer mode, so their person procedures take readers.

The profiles are two dimensions of two.  \verb|one_of(Xs?, S?, X)| draws \verb|X| uniformly from the list \verb|Xs| by \verb|random| from the seed \verb|S|, the list being non-empty by its type, as \verb|Peers| is; it is ordinary GLP, \full{in the repository}{in the supplementary material}.

{\small\begin{verbatim}
person homophile.
Menu =::= homophile_menu.
procedure homophile_menu(Menu?, Integer?).
homophile_menu(M, S) :-
    random(S?, 100, K, S1), pick(K?, S1?, M?) @ 1/week.
procedure pick(Integer?, Integer?, Menu?).
pick(K, S, menu(Ss, _, same(P?))) :-
    K? =< 98, ground(Ss?) | one_of(Ss?, S?, P).
pick(K, S, menu(_, Os, other(P?))) :-
    K? > 98, ground(Os?) | one_of(Os?, S?, P).

person wary.
Card =::= wary_card.
procedure wary_card(Card?, Integer?).
wary_card(C, S) :-
    random(S?, 100, K, _), answer(K?, C?) @ 1/day.
procedure answer(Integer?, Card?).
answer(K, card(_, yes)) :- K? =< 20 | true.
answer(K, card(_, no))  :- K? > 20  | true.

run 100 agents
    [ approach ~ (homophile : 0.6 ; indifferent : 0.4),
      response ~ (wary : 0.7 ; sociable : 0.3) ]
    until 1 year seed 20260927.
\end{verbatim}}

A homophile answers a menu after a week on average, from when it is asked, and picks a peer of their own sex 98 times in 100, uniformly from the list the menu shows; the indifferent is the homophile at \verb|2/week| and \verb|50|.  A wary person answers a card after a day on average and declines four times in five; the sociable is the wary at \verb|1/hour| and \verb|90|.  Approach and response make four profiles, and with sex eight kinds of agent.  Each agent starts by asking its person the first menu.

\begin{definition}[Friendship Graph of a Run]\label{def:friendship-graph}
The \temph{friendship graph} at simulated time $t$ of a run of this population is the graph on its agents $P$ whose edges are the $\{a, b\}$ for which the run's log (Definition~\ref{def:sglp-log}) has an entry $(t', a, \verb|card(|b\verb|, yes)|)$ or an entry $(t', b, \verb|card(|a\verb|, yes)|)$ with $t' < t$.
\end{definition}

An edge is a card answered \verb|yes| by either of its two agents before $t$, the card showing the other.

\begin{figure}[t]
\centering
\includegraphics[width=\columnwidth]{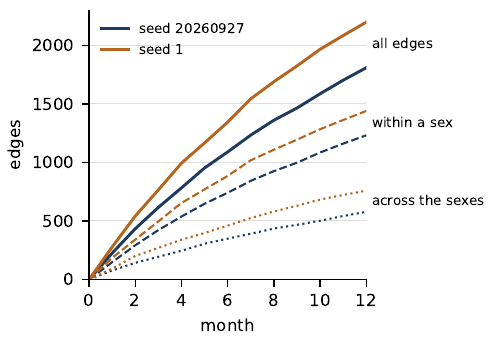}
\Description{A line chart of the edges of the friendship graph by month for two seeds, split into edges within a sex and across the sexes.}
\caption{The edges of the friendship graph at the end of each month, in two runs of 100 agents for one year: all edges, those within a sex and those across the sexes.}
\label{fig:social-graph-edges}
\end{figure}

\mypara{Results}  We ran this population with 100 agents for one year, with the seeds 20260927 and 1, each run nine seconds of wall-clock time (Section~\ref{sec:implementation}); 61 agents drew homophile and 74 wary with the first seed, 55 and 61 with the second.  With seed 20260927 the agents answered 7,235 menus, each answer an offer sent, and 7,224 cards, 2,767 of them yes; with seed 1, 7,523 menus and 7,507 cards, 3,543 of them yes.  Figure~\ref{fig:social-graph-edges} shows the edges of the friendship graph at the end of each month.  At the end of the year the graph has 1,810 edges with seed 20260927 and 2,199 with seed 1; it is one component from the third month on with the first seed and from the second with the other, and at the end of the first month two agents have no friend yet.  Of the 2,450 pairs within a sex, 50\% and 59\% are friends by the end of the year, and of the 2,500 pairs across the sexes, 23\% and 30\%: the indifferent answer a menu twice as often as the homophile, so 28\% and 32\% of the offers cross the sexes, and the edges across the sexes are 32\% and 35\% of the edges.  A yes to the card of one who is already a friend adds no edge; with seed 20260927 that was 730 of the 1,962 yeses within a sex and 227 of the 805 across the sexes.

\begin{table}[t]
\caption{The friendship graph at the end of the year by the mix of profiles, 100 agents, over the seeds 1 to 20: the edges and those across the sexes, mean $\pm$ standard deviation, and the month by whose end the graph is one component, mean and latest.}
\label{tab:social-graph-mix}
\footnotesize
\begin{tabular}{llrrr}
\toprule
homophile & wary & edges & across the sexes & one by month \\
\midrule
none & none & $4{,}199 \pm 26$ & $2{,}113 \pm 18$ & 1.0, 1 \\
none & 70\% & $2{,}634 \pm 159$ & $1{,}322 \pm 82$ & 1.3, 2 \\
none & all & $1{,}698 \pm 32$ & $845 \pm 25$ & 2.1, 3 \\
60\% & none & $3{,}213 \pm 88$ & $1{,}130 \pm 73$ & 1.0, 1 \\
60\% & 70\% & $1{,}926 \pm 147$ & $626 \pm 73$ & 2.2, 4 \\
60\% & all & $1{,}216 \pm 51$ & $385 \pm 37$ & 3.0, 5 \\
all & none & $2{,}163 \pm 22$ & $91 \pm 7$ & 1.1, 2 \\
all & 70\% & $1{,}335 \pm 74$ & $41 \pm 5$ & 2.4, 3 \\
all & all & $850 \pm 20$ & $20 \pm 4$ & 3.8, 7 \\
\bottomrule
\end{tabular}
\end{table}

\begin{figure}[t]
\centering
\includegraphics[width=\columnwidth]{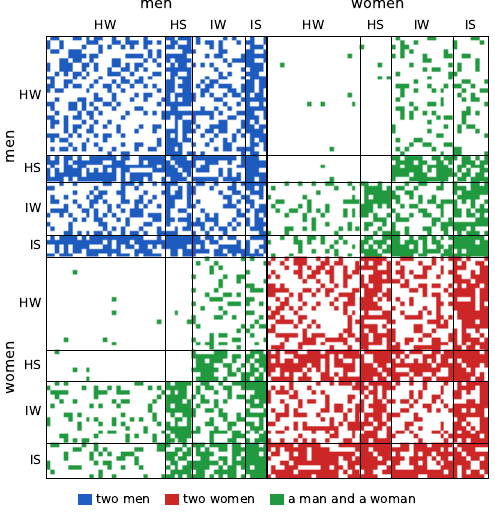}
\Description{An adjacency matrix of the friendship graph at the end of the year, agents ordered by sex and profile, edges coloured by the sexes of the pair.}
\caption{The friendship graph at the end of the year of the run with seed 20260927 as an adjacency matrix, a dot per edge: blue between two men, red between two women, green between a man and a woman.  The agents are ordered by sex and within a sex by profile: homophile wary (HW), homophile sociable (HS), indifferent wary (IW), indifferent sociable (IS).}
\label{fig:social-graph-matrix}
\end{figure}

\mypara{What the mix does}  Table~\ref{tab:social-graph-mix} has the friendship graph at the end of the year for each mix of none, 60\% or all homophile with none, 70\% or all wary, over twenty seeds, 1 to 20, the mean and the standard deviation of each count, and Figure~\ref{fig:social-graph-matrix} shows the graph of the run with seed 20260927 by profile: a sociable agent's row is dense, a wary one's sparse, and between two homophile of different sexes there is hardly an edge.  The edges go from about 850 with all homophile and all wary to about 4,200 with none of either, the spread over the seeds a few per cent of the mean at every mix: the homophile offer half as often as the indifferent and to half the peers, and the wary accept a fifth of the offers where the sociable accept nine in ten, though many of the sociable's yeses go to friends already, so all sociable in place of all wary has four and a half times the yeses and two and a half times the edges.  With all homophile the edges across the sexes are 2\% to 4\% of the edges and the graph within each sex is near complete, 2,072 edges of the 2,450 pairs with all sociable; with none homophile they are half the edges.  In every one of the 180 runs the graph is one component by the end of the year: by the end of the first month at every mix with none wary, and at the latest by the seventh, with all homophile and all wary.

\mypara{Scale}  The same population at 1,000 agents, 500 of each sex, for five simulated years, seed 20260927, is 740,622 answers in 38 minutes of wall-clock time (Section~\ref{sec:implementation}), 147,543 of the 370,245 cards answered yes.  The friendship graph has 28,211 edges at the end of the first year, 8,697 of them across the sexes, and 116,892 at the end of the fifth, 37,126 across; it is one component from the third month on, and at the end of the first month 19 agents have no friend yet.

\section{Example: Coins Among Friends}\label{sec:currencies}

The second platform is coins among friends, the coins-only restriction of grassroots currencies~\cite{shapiro2024gc,shapiro2026bonds}: a person's coins are units of their debt, a person holds only their friends' coins, and a swap at par between friends is mutual credit.  The agent's question is a menu showing its friends and its wallet and taking a request, to pay a friend or to propose a swap to one; a friend's proposal is a card showing who offers how many and how many of that friend's coins one holds, taking yes or no.  A payment sends the friend coins of their own, which they then no longer owe; an accepted swap issues to each party $K$ coins of the other's; and while a card waits for its answer the agent takes neither a message nor a request.  \verb|Peer|, \verb|Peers| and \verb|YesNo| are those of Section~\ref{sec:example}.

{\small\begin{verbatim}
Holding ::= holding(Peer, Integer).
Wallet  ::= List(Holding).
Request ::= pay(Peer, Integer) ; swap(Peer, Integer).
Menu    ::= menu(Peers, Wallet, Request?).
Offer   ::= offer(Peer, Integer, Integer, YesNo?).

%% Show the person their friends and wallet, and hand them
%% the request.
procedure (Menu)*ask(Peers?, Wallet?, Request).
(menu(Fs?, W?, R))*ask(Fs, W, R?).

%% Show the person who offers how many, and how many of
%% theirs they hold; hand them yes or no.
procedure (Offer)*respond(Peer?, Integer?, Integer?, YesNo).
(offer(From?, K?, Held?, A))*respond(From, K, Held, A?).
\end{verbatim}}

The agent's clauses, \full{in the repository}{in the supplementary material}, send a friend the person pays what the wallet holds of that friend's coins, up to the amount asked, and ask again; send a friend the person chooses a proposal of $K$ and ask again; put a proposal that arrives to the person, and on yes hold $K$ of the proposer's coins and send them $K$ of the agent's own, on no send a decline; hold coins that arrive from a friend, and hold nothing for a payment, which returns the agent's own coins; and drop the declines of their own proposals.

The profiles are two dimensions of two; \verb|one_of| is that of Section~\ref{sec:example}:

{\small\begin{verbatim}
%% Pay a friend whose coins I hold one to five; holding none,
%% propose a swap of ten to a friend.
procedure request(Integer?, Menu?).
request(S, menu(Fs, [], swap(F?, 10))) :-
    ground(Fs?) | one_of(Fs?, S?, F).
request(S, menu(_, [H|Hs], pay(F?, K?))) :-
    ground(H?), ground(Hs?) |
    random(S?, 5, K, S1), one_of([H?|Hs?], S1?, Hd),
    issuer(Hd?, F).

person spender.
Menu =::= spender_menu.
procedure spender_menu(Menu?, Integer?).
spender_menu(M, S) :- request(S?, M?) @ 1/day.

person cautious.
Offer =::= cautious_offer.
procedure cautious_offer(Offer?, Integer?).
cautious_offer(O, _) :- weigh(O?) @ 1/hour.
procedure weigh(Offer?).
weigh(offer(_, _, Held, yes)) :- Held? =< 20 | true.
weigh(offer(_, _, Held, no))  :- Held? > 20  | true.

run 100 agents
    [ spending ~ (spender : 0.5 ; saver : 0.5),
      credit ~ (cautious : 0.5 ; generous : 0.5) ]
    until 1 year seed 20260927.
\end{verbatim}}

A spender acts daily and a saver weekly, with the same \verb|request|: each pays a friend whose coins they hold one to five, and holding none proposes a swap of ten to a friend drawn from the list; \verb|issuer| reads the issuer of a holding.  The cautious accept a proposal while holding at most twenty of the proposer's coins, and the generous accept one as it comes, each within an hour on average.  Each agent is created with its identifier, its friends in the friendship graph at one year of the run of Section~\ref{sec:example}, and an empty wallet, and starts by asking its person the first menu, by its exported procedure \verb|start|.

\begin{definition}[Wallet, Circulation]\label{def:circulation}
Let $W_a(c)$ be the term at the argument of type \verb|Wallet| of the goal of the procedure \verb|agent| at agent $a$ in a configuration $c$, the \temph{wallet} of $a$ in $c$.  The \temph{circulation} at simulated time $t$ of a run $r$ of this population is $\sum_{a \in P} \sum_{\texttt{holding(}\_\texttt{, }k\texttt{)} \in W_a(c)} k$, $c$ the last configuration of $r$ with simulated time at most $t$.
\end{definition}

The configuration $c$ is quiescent, the clock advancing only at a Release (Definition~\ref{def:sglp-ts}), so every wallet in it is ground.  Every holding is of another's coins, so the circulation is the debt the agents hold of one another.

\mypara{Results}  We ran this population with 100 agents for one year over the friendship graph at one year of Section~\ref{sec:example}'s run of the same seed, with the seeds 20260927 and 1 (1,810 and 2,199 edges), each run about ten seconds of wall-clock time.  With seed 20260927 the agents paid 17,232 times and proposed 2,908 swaps, 2,840 accepted and 68 declined; with seed 1, 15,873 pays and 2,726 proposals, 2,693 accepted and 33 declined.  An accepted swap of ten issues twenty coins and a payment returns what it pays to its issuer, so the circulation is what has been issued less what has been paid: it is 2,567 and 2,837 coins at the end of the first month and grows by about 700 and 800 a month to 10,467 and 11,355 at the end of the year, held in 711 and 830 holdings by 97 of the 100 agents with the first seed and by all of them with the second.  The cautious decline a proposal from a friend of whose coins they hold more than twenty, so declines come as holdings grow: 68 of the 2,908 proposals were declined with seed 20260927, 57 of them in the second half of the year.  Figure~\ref{fig:coins-holdings} shows who holds whose coins in the run with seed 20260927, in which 48 agents drew spender and 59 cautious: at the end of the year 9,754 of the 10,467 coins are spenders' coins in savers' wallets, the spenders having issued 10,109 of the coins out and the savers holding 9,916 of them, and the two run together from the first month; the other three pairs hold about seven hundred between them.  Figure~\ref{fig:coins-matrix} shows the holdings one by one at the end of that year: the generous savers hold 6,341 coins and the cautious savers 3,575, where no cautious holding passes thirty, since the cautious decline above twenty; the cautious spenders have 7,164 coins out and the generous spenders 2,945; and the two spender profiles hold about five hundred and fifty between them.

\begin{figure}[t]
\centering
\includegraphics[width=\columnwidth]{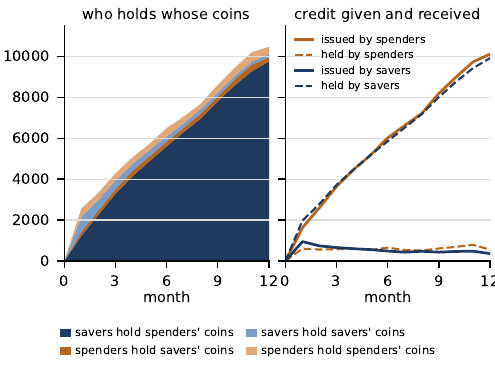}
\Description{Two charts by month: the circulation split by which profile holds which profile's coins, and the coins issued and held by spenders and by savers.}
\caption{The run with seed 20260927, 100 agents, half spenders and half cautious, by month.  Left: the circulation by who holds whose coins.  Right: the coins issued by the spenders and by the savers, and those held by each --- the credit each profile has given and received.}
\label{fig:coins-holdings}
\end{figure}

\begin{figure}[t]
\centering
\includegraphics[width=\columnwidth]{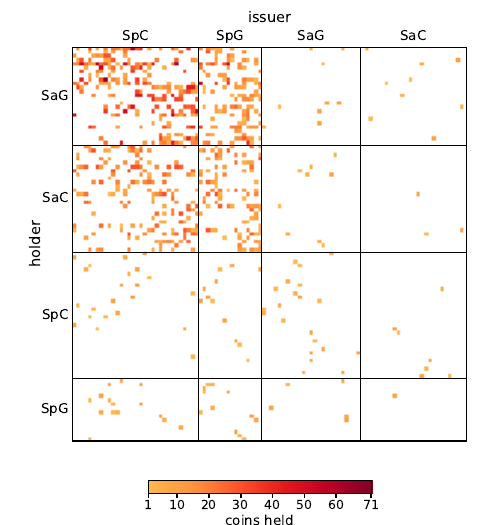}
\Description{A matrix of who holds whose coins at the end of the year, holders by row and issuers by column, ordered by profile, coloured by the coins held.}
\caption{Who holds whose coins at the end of the year of the run with seed 20260927: holders by row, issuers by column, a cell the coins held, white where none.  The agents are ordered by profile --- spender (Sp) or saver (Sa), cautious (C) or generous (G) --- the holders' profiles by the coins they hold and the issuers' by the coins they have out, most first.}
\label{fig:coins-matrix}
\end{figure}

\begin{table}[t]
\caption{Coins among friends at the end of the year by the mix of profiles, 100 agents over the friendship graphs of Section~\ref{sec:example}, over the seeds 1 to 20, mean $\pm$ standard deviation: the circulation, the proposals made and those declined.}
\label{tab:coins-mix}
\footnotesize
\begin{tabular}{llrrr}
\toprule
spenders & cautious & circulation & proposals & declined \\
\midrule
none & none & $1{,}660 \pm 116$ & $687 \pm 10$ & $0$ \\
none & half & $1{,}660 \pm 116$ & $687 \pm 10$ & $0$ \\
none & all & $1{,}655 \pm 115$ & $687 \pm 10$ & $0.1 \pm 0.2$ \\
half & none & $11{,}546 \pm 722$ & $2{,}917 \pm 131$ & $0$ \\
half & half & $11{,}282 \pm 797$ & $2{,}961 \pm 143$ & $64 \pm 25$ \\
half & all & $10{,}328 \pm 621$ & $3{,}030 \pm 150$ & $190 \pm 49$ \\
all & none & $1{,}702 \pm 148$ & $4{,}369 \pm 27$ & $0$ \\
all & half & $1{,}683 \pm 155$ & $4{,}368 \pm 27$ & $0.1 \pm 0.3$ \\
all & all & $1{,}675 \pm 135$ & $4{,}370 \pm 29$ & $1.2 \pm 1.7$ \\
\bottomrule
\end{tabular}
\end{table}

\begin{figure}[t]
\centering
\includegraphics[width=\columnwidth]{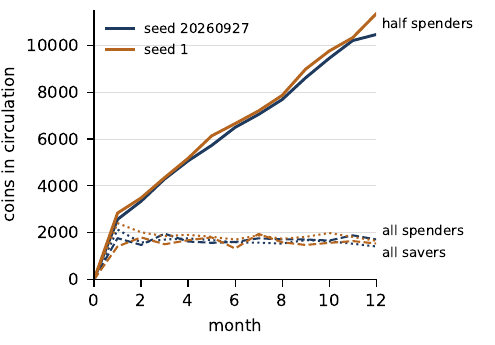}
\Description{A line chart of the coins in circulation by month for none, half and all spenders, two seeds.}
\caption{The circulation at the end of each month with half the agents cautious and none, half or all of them spenders, 100 agents for one year over the friendship graphs of Section~\ref{sec:example}, seeds 20260927 and 1.}
\label{fig:coins-circulation}
\end{figure}

\mypara{What the mix does}  Table~\ref{tab:coins-mix} has the circulation at the end of the year for each mix of none, half or all spenders with none, half or all cautious, over the seeds 1 to 20, each over the friendship graph of Section~\ref{sec:example}'s run of the same seed, and Figure~\ref{fig:coins-circulation} the circulation by month with half cautious, for the two seeds of the figures.  The circulation is highest where spenders and savers mix: about 11,300 coins at half and half, within 800 over the seeds, and still growing, against about 1,700 with all savers or all spenders.  A payment carries a friend's own coins back to them, and a spender pays daily, so with all spenders a swap's twenty coins are paid back within days and after 4,369 swaps 1,702 coins are out; a saver pays weekly, so with all savers few swaps are made and few coins are out; and at half and half a saver holds for weeks the coins of the spenders who pay it, which is where the circulation is.  The credit dimension hardly bites at these holdings: with all savers or all spenders a holding passes twenty in a run or two of the twenty, so those runs are the same whatever the credit mix but for a decline or two, and only with half the agents spenders do the cautious decline, 64 of 2,961 proposals on average, and 190 of 3,030 with all cautious.

\section{Related Work}\label{sec:related-work}

Six lines of work meet here: process calculi and languages with rates, simulated exactly; rule-based languages whose simulators do not enumerate the species they generate; languages and formalisms for continuous-time agent-based simulation; the agent-based platforms and the testing of multiagent systems in simulation; simulated people with designated profiles; and what the two examples study, homophily in the formation of friendships and the circulation of mutual credit.  sGLP differs from all of them in one way: it simulates the program the platform deploys, and its stochastic specification covers the person.  Appendix~\ref{app:related} expands each line.

\mypara{Calculi and languages with rates} The stochastic pi-calculus attaches a rate to each prefix~\cite{priami1995stochastic}; its biochemical variant moves the rate onto the channel and settles the race among the enabled communications by Gillespie's direct method~\cite{priami2001application,gillespie1977exact}, and BioSPI implemented it in Flat Concurrent Prolog~\cite{shapiro1989family,regev2000representation,priami2001application,silverman1988logix}, a monitor drawing the channel and advancing the clock, as ours does.  The stochastic pi-machine groups identical processes and counts the copies, and is proved reduction-equivalent to the calculus~\cite{phillips2007efficient}; a later machine carries the grouping to restricted channels~\cite{pauleve2010generic}.  Bio-PEPA is a stochastic process algebra with the same simulation~\cite{ciocchetta2009biopepa}, and stochastic concurrent constraint programming attaches rates to a concurrent constraint language and simulates it by Gillespie's method~\cite{bortolussi2008stochastic}, the nearest language to sGLP.  The grouping is not available to us, since no two agents of a grassroots platform share a state; the proof we meet in the same form, Theorem~\ref{thm:soundness}.

\mypara{Rule-based languages} Kappa's simulator does not count species at all~\cite{danos2007scalable}: it over-approximates the next events, rejects the draws that apply no rule, corrects the time advance, and wakes only the rules a static activation relation names; in sGLP the rules are the interactive types, few, and the instances the open questions, many, and the activation relation follows from the semantics, since a reduction reaches one agent.

\mypara{Continuous-time agent-based simulation} ML3 specifies an agent by stochastic rules of a guard, a rate and an effect, alternative decisions competing by stochastic race, and simulates by the direct method with a dynamic dependency graph~\cite{koster2024fast}; DEVS is the discrete-event formalism such simulators rest on~\cite{zeigler2000theory}.  ML3 is the nearest prior work to sGLP, and the difference is what is simulated: there the rules are the model, here a vGLP program deployed as it stands, and the stochastic specification covers the person's answers.

\mypara{Agent-based platforms, and testing in simulation} NetLogo, Repast, MASON, Mesa and krABMaga schedule every agent at each step of a discrete clock~\cite{antelmi2024reliable}, where a stochastic race has no formulation~\cite{koster2024fast}, and none has a place for the person.  Testing a multiagent system by simulating the code it deploys, with no surrogate, is argued for BDI agents mapped onto a discrete-event simulator~\cite{baiardi2026testing}; there the agents are the system and no person is simulated, where here the program is the platform's and the person is what is drawn.

\mypara{Simulated people with profiles} Generative agents give each simulated person a text persona and have a language model answer for them~\cite{park2023generative}, OASIS runs a million such agents on a social platform~\cite{yang2024oasis}, and user simulation evaluates interactive systems by simulated users with personas~\cite{balog2024user,liu2026personaeval}.  An sGLP profile is a program with rates, typed by the dual of the question it answers, drawn from a seed and reproducible, and the simulation is a run of the deployed program; a persona is a text and its answers are a model's.

\mypara{What the examples study} Homophily, the tendency to befriend one's like, is the best-attested regularity of social networks~\cite{mcpherson2001birds}, and Currarini, Jackson and Pin model friendship formation under it, with the mixing set by a group's size and bias~\cite{currarini2009economic}; Section~\ref{sec:example}'s runs are such a model's dynamics, with the bias a profile and the platform the one deployed.  Mutual credit is studied on the Sardex network, its transactions' cycles and the liquidity they save~\cite{iosifidis2018sardex,fleischman2020liquidity,littera2017sardex}, and modelled for LETS~\cite{schraven2001mutual}; Section~\ref{sec:currencies} simulates coins among friends over the friendship graph the social graph formed, with spending and credit as profiles.

\mypara{The exponential assumption} The inter-event times of human activity are heavier-tailed than exponential; the non-Markovian and the Laplace Gillespie algorithms simulate renewal processes with other distributions~\cite{masuda2022gillespie}, and nothing in sGLP fixes the exponential: drawing a rated goal's activation time from another distribution changes the draw and leaves the language as it is.

\mypara{What is new} To the best of our knowledge, sGLP is the first language in which the person of a deployed multiagent program is specified as a stochastic agent answering the questions that program puts to them, so that one program is simulated at scale and deployed on a phone without being rewritten, and the first in which the stochastic specification is confined to the person while the machine keeps a nondeterministic semantics with no scheduling assumption.

\section{Conclusion}\label{sec:conclusion}

sGLP is GLP with a rate on a goal, and a stochastic agent is an sGLP program in place of the UI of a vGLP platform and the person behind it, holding the person's end of each question the platform asks, typed by the dual of the question's interactive type.  The platform's program is unchanged; a population is a mix of profiles of person, each a program; and every simulated run is a run of the platform in which every answer was given by a stochastic agent, so what a simulation shows is something the platform can do.

Here we employed an exponential distribution; another distribution can be employed without changing the language~\cite{masuda2022gillespie}. 

The simulations done here show that platform behaviour can be tested and studied by mixing the stochastic profiles of its people: the friendship graph that forms, and how many coins circulate.

Four limitations.  The simulator hosts the population in one machine, where the guard \verb|when_idle| is sound (Section~\ref{sec:implementation}).  A rate per act is a modelling assumption, and the inter-event times of people are heavier-tailed than exponential (Section~\ref{sec:related-work}).  The profiles and their rates are written, not estimated from how people use the deployed platforms.  Reading the logs and analysing the runs is software outside the language (Section~\ref{sec:simulation}).

\bibliography{bib}

@misc{shapiro2026bonds,
  author = {Ehud Shapiro},
  title  = {Grassroots Bonds: Financing by the People, for the People},
  year   = {2026},
  howpublished = {arXiv:2603.13671},
  eprint = {2603.13671},
  archivePrefix = {arXiv},
  url    = {https://arxiv.org/abs/2603.13671}
}

@misc{shapiro2026volition,
      title={Volition Elicitation:  Operational Semantics for People and Their Machines}, 
      author={Ehud Shapiro},
      year={2026},
      howpublished={arXiv:2607.14138},
      eprint={2607.14138},
      archivePrefix={arXiv},
      primaryClass={cs.DC},
      url={https://arxiv.org/abs/2607.14138}, 
}

@article{littera2017sardex,
  author  = {Littera, Giuseppe and Sartori, Laura and Dini, Paolo and Antoniadis, Panayotis},
  title   = {From an Idea to a Scalable Working Model: Merging Economic Benefits with Social Values in {Sardex}},
  journal = {International Journal of Community Currency Research},
  volume  = {21},
  pages   = {6--21},
  year    = {2017},
  doi     = {10.15133/j.ijccr.2017.002}
}

@article{lewis2026volitional,
  title={Volition-Guarded Multiagent Atomic Transactions: Describing People and their Machines},
  author={Lewis-Pye, Andy and Shapiro, Ehud},
  journal={arXiv preprint arXiv:2604.25596},
  year={2026},
  url = {https://arxiv.org/abs/2604.25596}
}

@article{shapiro2025characterising,
  title={Characterising Global Platforms: Centralised, Decentralised, Federated, and Grassroots},
  author={Shapiro, Ehud},
  journal={arXiv preprint arXiv:2511.03286},
  year={2025},
  url = {https://arxiv.org/abs/2511.03286}
}

@article{shapiro2026types,
  title={Types for Grassroots Logic Programs (Full version)},
  author={Shapiro, Ehud},
  journal={arXiv preprint arXiv:2601.17957},
  year={2026},
  url = {https://arxiv.org/abs/2601.17957}
}

@article{shapiro2026implementing,
  title={Implementing Grassroots Logic Programs with Multiagent Transition Systems and AI (Full Version)},
  author={Shapiro, Ehud},
  journal={arXiv:2602.06934.  Summary to appear in Proc. of LOPSTR+PPDP'26},
  year={2026},
  url = {https://arxiv.org/abs/2602.06934}
}

@misc{dart2024,
     author       = {{Google}},
     title        = {Dart Programming Language},
     year         = {2024},
  url = {https://dart.dev/}
}

@article{shapiro2025glp,
  title={GLP: A Grassroots, Multiagent, Concurrent, Logic Programming Language for AI (full version)},
  author={Shapiro, Ehud},
  journal={arXiv preprint arXiv:2510.15747, Summary in Proc. of ICLP'26, EPTCS 450, pp. 119--133},
  year={2025},
  url = {https://arxiv.org/abs/2510.15747}
}

@incollection{regev2000representation,
  title={Representation and simulation of biochemical processes using the $\pi$-calculus process algebra},
  author={Regev, Aviv and Silverman, William and Shapiro, Ehud},
  booktitle={Biocomputing 2001},
  pages={459--470},
  year={2000},
  publisher={World Scientific},
  doi = {10.1142/9789814447362_0045}
}

@inproceedings{shapiro2023gsn, 
author={Shapiro, Ehud},
title={Grassroots Social Networking: Serverless, Permissionless Protocols for Twitter/LinkedIn/WhatsApp},
year = {2023},
isbn = {979-8-4007-0225-9/23/09},
publisher = {Association for Computing Machinery},
doi = {10.1145/3599696.3612898},
booktitle = {OASIS ’23},
location = {Rome, Italy},
}

@inproceedings{shapiro2023grassrootsBA, 
author={Shapiro, Ehud},
 title={Grassroots Distributed Systems: Concept, Examples, Implementation and Applications (Brief Announcement)
},year = {2023},
publisher = {LIPICS},
booktitle = {37th International Symposium on Distributed Computing (DISC 2023). (Extended version: arXiv:2301.04391)},
address = {Italy},
notes={Extended version: arXiv preprint arXiv:2301.04391},
pages = {47:1--47:7},
  url = {https://arxiv.org/abs/2301.04391}
}

@article{shapiro2024grassroots,
  title={A Grassroots Architecture to Supplant Global Digital Platforms by a Global Digital Democracy},
  author={Shapiro, Ehud},
  journal={arXiv:2404.13468, Proceedings of DAWO'24},
  year={2024},
  url = {https://arxiv.org/abs/2404.13468}
}

@article{shapiro2024gc,
  title={Grassroots Currencies: Turning Mutual Trust into Liquidity},
  author={Shapiro, Ehud},
  journal={arXiv preprint arXiv:2202.05619},
  year={2022},
  url = {https://arxiv.org/abs/2202.05619}
}

@article{shapiro1989family,
  doi = {10.1145/72551.72555},
  title={The family of concurrent logic programming languages},
  author={Shapiro, Ehud},
  journal={ACM Computing Surveys (CSUR)},
  volume={21},
  number={3},
  pages={413--510},
  year={1989},
  publisher={ACM New York, NY, USA}
}

@article{shapiro2021multiagent,
  title={Multiagent Transition Systems for Composing Fault-Resilient Protocol Stacks},
  author={Shapiro, Ehud},
  journal={arXiv preprint arXiv:2112.13650},
  year={2021},
  url = {https://arxiv.org/abs/2112.13650}
}

@misc{proof,
  title={Proof of Stake FAQ, Ethereum Wiki},
  author={Nakamoto, Satoshi and Bitcoin, A},
  howpublished = { \\ \MYhref{https://eth.wiki/en/concepts/proof-of-stake-faqs}{https://eth.wiki/en/concepts/proof-of-stake-faqs}},
  year={2019}
}

@incollection{silverman1988logix,
  title={The Logix system user manual Version 1.21},
  author={Silverman, William and Hirsch, Michael and Houri, Avshalom and Shapiro, Ehud},
  booktitle={Concurrent Prolog: Collected Papers},
  pages={46--77},
  year={1988},
  publisher={MIT Press}
}

@article{priami2001application,
  title={Application of a stochastic name-passing calculus to representation and simulation of molecular processes},
  author={Priami, Corrado and Regev, Aviv and Shapiro, Ehud and Silverman, William},
  journal={Information processing letters},
  volume={80},
  number={1},
  pages={25--31},
  year={2001},
  publisher={Elsevier},
  doi = {10.1016/s0020-0190(01)00214-9}
}

@article{iosifidis2018sardex,
  author  = {Iosifidis, George and Charette, Yanick and Airoldi, Edoardo and Littera, Giuseppe and Tassiulas, Leandros and Christakis, Nicholas},
  title   = {Cyclic Motifs in the {Sardex} Monetary Network},
  journal = {Nature Human Behaviour},
  volume  = {2},
  pages   = {822--829},
  year    = {2018},
  doi     = {10.1038/s41562-018-0450-0}
}

@misc{shapiro2026gsg,
  author = {Ohad Eitan and Idit Keidar and Ehud Shapiro},
  title  = {The Grassroots Social Graph: A Secure Operating System for Multiagent, Distributed, Peer-to-Peer, Smartphone-Based Platforms},
  year   = {2026},
  note   = {In preparation}
}

@article{shapiro2026formalising,
  author  = {James Golike and Andy Lewis-Pye and Ehud Shapiro},
  title   = {Formalising and Implementing Grassroots Social Contracts: From Legal Text to Grassroots Platforms (Full Version)},
  year    = {2026},
  journal = {arXiv:2609.06231}
}

@inproceedings{park2023generative,
  author    = {Joon Sung Park and Joseph C. O'Brien and Carrie Jun Cai and Meredith Ringel Morris and Percy Liang and Michael S. Bernstein},
  title     = {Generative Agents: Interactive Simulacra of Human Behavior},
  booktitle = {Proceedings of the 36th Annual ACM Symposium on User Interface Software and Technology (UIST 2023)},
  pages     = {1--22},
  year      = {2023},
  doi       = {10.1145/3586183.3606763}
}

@article{priami1995stochastic,
  title={Stochastic $\pi$-calculus},
  author={Priami, Corrado},
  journal={The Computer Journal},
  volume={38},
  number={7},
  pages={578--589},
  year={1995},
  publisher={Oxford University Press}
}

@article{gillespie1977exact,
  title={Exact stochastic simulation of coupled chemical reactions},
  author={Gillespie, Daniel T.},
  journal={The Journal of Physical Chemistry},
  volume={81},
  number={25},
  pages={2340--2361},
  year={1977},
  publisher={American Chemical Society},
  doi={10.1021/j100540a008}
}

@inproceedings{pauleve2010generic,
  title={A Generic Abstract Machine for Stochastic Process Calculi},
  author={Paulev{\'e}, Lo{\"i}c and Youssef, Simon and Lakin, Matthew R. and Phillips, Andrew},
  booktitle={Proceedings of the 8th International Conference on Computational Methods in Systems Biology (CMSB '10)},
  pages={43--54},
  year={2010},
  publisher={ACM},
  doi={10.1145/1839764.1839771}
}

@inproceedings{danos2007scalable,
  title={Scalable Simulation of Cellular Signaling Networks},
  author={Danos, Vincent and Feret, J{\'e}r{\^o}me and Fontana, Walter and Krivine, Jean},
  booktitle={Proceedings of the 5th Asian Symposium on Programming Languages and Systems (APLAS'07)},
  pages={139--157},
  year={2007},
  publisher={Springer},
  doi={10.1007/978-3-540-76637-7_10}
}

@misc{masuda2022gillespie,
  title={Gillespie Algorithms for Stochastic Multiagent Dynamics in Populations and Networks},
  author={Masuda, Naoki and Vestergaard, Christian L.},
  note={Elements in the Structure and Dynamics of Complex Networks, Cambridge University Press},
  year={2021},
  eprint={2112.05293},
  archivePrefix={arXiv}
}

@article{koster2024fast,
  title={A Fast Embedded Language for Continuous-Time Agent-Based Simulation},
  author={K{\"o}ster, Till and Reinhardt, Oliver and Hinsch, Martin and Bijak, Jakub and Uhrmacher, Adelinde M.},
  journal={Journal of Artificial Societies and Social Simulation},
  volume={27},
  number={1},
  pages={10},
  year={2024},
  doi={10.18564/jasss.5232}
}

@article{antelmi2024reliable,
  title={Reliable and Efficient Agent-Based Modeling and Simulation},
  author={Antelmi, Alessia and Caramante, Pasquale and Cordasco, Gennaro and D'Ambrosio, Giuseppe and De Vinco, Daniele and Foglia, Francesco and Postiglione, Luca and Spagnuolo, Carmine},
  journal={Journal of Artificial Societies and Social Simulation},
  volume={27},
  number={2},
  pages={4},
  year={2024},
  doi={10.18564/jasss.5300}
}

@inproceedings{phillips2007efficient,
  title={Efficient, Correct Simulation of Biological Processes in the Stochastic Pi-calculus},
  author={Phillips, Andrew and Cardelli, Luca},
  booktitle={Computational Methods in Systems Biology (CMSB 2007)},
  series={Lecture Notes in Bioinformatics},
  volume={4695},
  pages={184--199},
  year={2007},
  publisher={Springer},
  doi={10.1007/978-3-540-75140-3_13}
}

@article{schraven2001mutual,
  title={Mutual Credit Systems and the Commons Problem: Why Community Currency Systems such as {LETS} Need Not Collapse Under Opportunistic Behaviour},
  author={Schraven, Jorim},
  journal={International Journal of Community Currency Research},
  volume={5},
  number={4},
  year={2001},
  doi={10.15133/j.ijccr.2001.002}
}

@article{fleischman2020liquidity,
  title={Liquidity-Saving through Obligation-Clearing and Mutual Credit: An Effective Monetary Innovation for {SMEs} in Times of Crisis},
  author={Fleischman, Toma{\v{z}} and Dini, Paolo and Littera, Giuseppe},
  journal={Journal of Risk and Financial Management},
  volume={13},
  number={12},
  pages={295},
  year={2020},
  doi={10.3390/jrfm13120295}
}

@article{baiardi2026testing,
  title={Testing {BDI}-based Multi-Agent Systems using Discrete Event Simulation},
  author={Baiardi, Martina and Burattini, Samuele and Ciatto, Giovanni and Pianini, Danilo},
  journal={arXiv preprint arXiv:2602.13878},
  year={2026},
  url={https://arxiv.org/abs/2602.13878}
}

@article{bortolussi2008stochastic,
  title={Modeling Biological Systems in Stochastic Concurrent Constraint Programming},
  author={Bortolussi, Luca and Policriti, Alberto},
  journal={Constraints},
  volume={13},
  number={1--2},
  pages={66--90},
  year={2008},
  doi={10.1007/s10601-007-9034-8}
}

@article{ciocchetta2009biopepa,
  title={{Bio-PEPA}: a Framework for the Modelling and Analysis of Biochemical Networks},
  author={Ciocchetta, Federica and Hillston, Jane},
  journal={Theoretical Computer Science},
  volume={410},
  number={33--34},
  pages={3065--3084},
  year={2009}
}

@book{zeigler2000theory,
  title={Theory of Modeling and Simulation: Integrating Discrete Event and Continuous Complex Dynamic Systems},
  author={Zeigler, Bernard P. and Praehofer, Herbert and Kim, Tag Gon},
  edition={2nd},
  publisher={Academic Press},
  year={2000}
}

@article{yang2024oasis,
  title={{OASIS}: Open Agent Social Interaction Simulations with One Million Agents},
  author={Yang, Ziyi and Zhang, Zaibin and Zheng, Zirui and Jiang, Yuxian and Gan, Ziyue and Wang, Zhiyu and Ling, Zijian and Chen, Jinsong and Ma, Martz and Dong, Bowen and Gupta, Prateek and Hu, Shuyue and Yin, Zhenfei and Li, Guohao and Jia, Xu and Wang, Lijun and Ghanem, Bernard and Lu, Huchuan and Lu, Chaochao and Ouyang, Wanli and Qiao, Yu and Torr, Philip and Shao, Jing},
  journal={arXiv preprint arXiv:2411.11581},
  year={2024},
  url={https://arxiv.org/abs/2411.11581}
}

@article{balog2024user,
  title={User Simulation for Evaluating Information Access Systems},
  author={Balog, Krisztian and Zhai, ChengXiang},
  journal={Foundations and Trends in Information Retrieval},
  volume={18},
  number={1--2},
  pages={1--261},
  year={2024},
  doi={10.1561/1500000098}
}

@article{liu2026personaeval,
  title={{PersonaEval}: Persona-Based User Simulation for Evaluating Interactive Applications},
  author={Liu, Yifan Simon and Wen, Qianfeng and Fan, Yilan and Huang, Shirley and Gao, Ruoqi and Hou, Jianheng and Mohsin, Muhammad Ahmed and Di, Zonglin and Joshi, Brihi and Tan, Xincheng and Lu, Yucheng and Liu, Xiaoyi and Liu, Heming and Xing, Hanwen and Min, Guanghui and Shan, Zhengyang and Nguyen, My Chiffon and Gupta, Ishan and Xiao, Yunze and Collison, Hannah and Huang, Jintao and Li, Jiatong and Jajee, Sankalp and Zhao, Yunhan and Hu, Bing and Ng, Sky and Chen, Xupeng and Lu, Binghang and Xiao, Weihang and Mohan, Aravind and Sun, Bolun and Wu, Yunshu and Xu, Yuanda and Shen, Yun and Zhang, Runyu and Deng, Zheyuan and Zhang, Zhiwei and Zhu, Qianyu and Wang, Dianzhuo and Wang, Yijun and He, Yixuan and Hao, Yuexing and Li, Xiaomin},
  journal={arXiv preprint arXiv:2608.15838},
  year={2026},
  url={https://arxiv.org/abs/2608.15838}
}

@article{mcpherson2001birds,
  title={Birds of a Feather: Homophily in Social Networks},
  author={McPherson, Miller and Smith-Lovin, Lynn and Cook, James M.},
  journal={Annual Review of Sociology},
  volume={27},
  pages={415--444},
  year={2001},
  doi={10.1146/annurev.soc.27.1.415}
}

@article{currarini2009economic,
  title={An Economic Model of Friendship: Homophily, Minorities, and Segregation},
  author={Currarini, Sergio and Jackson, Matthew O. and Pin, Paolo},
  journal={Econometrica},
  volume={77},
  number={4},
  pages={1003--1045},
  year={2009},
  doi={10.3982/ECTA7528}
}

\clearpage
\appendix
\section{Proofs}\label{app:proofs}

\soundnesssetting
\begin{theoremrestated}
\soundnessstatement
\end{theoremrestated}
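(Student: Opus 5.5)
The proof is a step-by-step simulation argument: each transition of $r$ maps under $\sigma$ either to a stutter (identical images) or to a single vmaGLP transition. I would reuse vGLP's compilation correctness result recalled in Appendix~\ref{app:recalled} (that $\lceil M\rceil$ runs map under the compilation mapping of Definition~\ref{def:rc-sigma} to vmaGLP runs) for everything that concerns platform goals, and argue only about what is new: the person goals, the relays, and the Release transitions.

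First I would check $\sigma(c^0)$. At each agent the platform goal $g'_a$ reads back to $g_a$, the person goal contributes nothing, and $H$ and $\sigma_r$ are empty. So $\sigma(c^0)$ is the initial vmaGLP configuration for those goals.

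Then I would take cases on the transition $(c^i,t_i,\alpha_i)\to(c^{i+1},t_{i+1},\alpha_{i+1})$.
\begin{itemize}
\item A Release leaves $c$ unchanged, so its image is a stutter.
\item A Reduce of a platform goal uses a clause of $\lceil M\rceil$ other than the construct processes' and the dispatcher's. By well-formedness and the distinctness of predicates required in Definition~\ref{def:person-process}, no person clause is applicable. Its image is then exactly what the compilation mapping gives: a Reduce of $M$ (an asking clause's reduction being a Reduce that adds its interactive variable to $H$), or a stutter for \verb|merge| goals and the ask-stream bookkeeping.
\item A Communicate or Cold-call between platform goals maps likewise.
\item A Communicate delivering an assignment to a person goal maps to vGLP's Communicate of the program's output to the person, recorded in $H$ as ``shown.''
\item A Reduce of a person goal is either a reduction of $\mathit{person}_\phi$, of a profile procedure, or of a relay. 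By clauses (\ib) and (\ic) of Definition~\ref{def:relay}, such a reduction assigns no received variable except to take a place (invisible under $\sigma$, by reading a writer as its question) or to give an answer. By clause (\ia), an answer is at most one per writer, is made only where that writer is open, is ground at the positions the person writes, and carries fresh writers at the outermost positions the program writes.
\end{itemize}
So a person-goal Reduce that adds an answer maps to a vmaGLP Answer of the corresponding question in $H$, or to a Dismiss where the answer closes a stream or list with \verb|[]|. Any other person-goal Reduce maps to a stutter. This yields the clause of the statement that every Answer and every Dismiss is the image of a person-goal Reduce, since platform goals never produce Answer or Dismiss under the compilation mapping.

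It remains to show that $\sigma(r)$ is proper and safe. Safety is inherited: by Definition~\ref{def:sglp-run}, the maGLP configurations of $r$ form a proper safe computation, and $\sigma$ only restricts and renames assignments. It is single-writer on the question, because a writer and those taking its place all read to one question, and clause (\ia) allows one answer. Properness (fairness) I would transfer using the fact that $r$'s maGLP projection is proper: a vmaGLP transition enabled forever in $\sigma(r)$ is either a platform transition, which is then enabled forever in $r$ and taken, or an Answer. Properness places no obligation on Answers, since they are volitional.

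The main obstacle is the person-goal Reduce case: showing rigorously that the relay conditions make each such step's effect on $\sigma$ exactly an Answer, a Dismiss, or nothing. The hard part is the bookkeeping of $I_a$ through place-taking and nested questions inside answers, in particular that $H$ and $\sigma_r$ evolve as vmaGLP's Answer rule prescribes. I would attack it by induction on $r$ with an invariant tying each open writer of $I_a$ to a unique question in $H$ of the image, with the person's ends held only by person goals.
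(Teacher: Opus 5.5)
Your skeleton is the paper's: Release as a stutter, platform steps through Lemma~\ref{lem:rc-correspondence}, and person-goal steps through Definition~\ref{def:relay}. But your properness argument proves the wrong property. Here \emph{proper} (Definition~\ref{def:rc-proper}) is a freshness condition, not fairness. A variable newly occurring in $G_p$ must not have occurred earlier in $G_p$ or in $H_p$, and each Reduce's clause must be renamed apart from every agent's goals, held variables and readers substitution. Liveness is a separate notion (Definition~\ref{def:rc-ts}) that the theorem does not claim. Your transfer of it would fail anyway, since an sGLP run need not be live (Definition~\ref{def:sglp-run}).

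The real obligation concerns the new component $H$: no variable may enter a platform goal's source reading after having been held by the person. The paper shows that a variable new in the source goals occurs in a non-\texttt{merge} platform goal of $a$ in $c^{i+1}$ and in no goal of $a$ in $c^i$. This holds because ask variables reach only \texttt{merge} goals, the person process and relays. It also holds because no variable a person goal holds reaches a platform goal: answers' variables are fresh, and a place-taking writer is read as its question. By maGLP properness the variable then occurs in no earlier goal of $a$, hence in no earlier $H$, whose variables always occur in goals of $a$ or in terms assigned to ask-stream variables or to readers person goals hold. Renaming apart carries over because each source Reduce renames the compiled clause its preimage renames. Safety, likewise, comes from your case analysis, not from maGLP safety of $r$.

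Your case analysis also misclassifies steps. Definition~\ref{def:relay}\,(\ia) does not confine a relay to open writers: it may answer, or hand on the place of, a writer whose reader no longer occurs.
\begin{itemize}
\item An answer to such a writer cannot be a vmaGLP Answer, whose premise is that $X?$ occurs (Definition~\ref{def:rc-vmaglp}). It is the Dismiss of $X$, the answer's variables lying outside $I_a$ and $\sigma_r$ not taking the assignment.
\item A relay passing a closed writer's place to a fresh one is not a stutter but a Dismiss. The fresh writer is outside $I_a$ (Definition~\ref{def:sglp-goals}), so $H$ loses $X$.
\item Conversely, ending an open list with \texttt{[]} is an Answer, not a Dismiss, because it adds to $\sigma_r$.
\end{itemize}

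The Communicate cases need splitting too.
\begin{itemize}
\item Delivering an ask to a \texttt{merge} goal or to the person process is a stutter, its variables being in $H$ already.
\item Delivering to a person goal an assignment to a variable outside $I_a$, or a person goal's non-answer assignment, is also a stutter.
\item Only delivering to a relay an assignment $\sigma_r$ holds is a Communicate to the person.
\item A relay's answer delivered to a platform goal, which is not ``between platform goals'', is the source Communicate.
\end{itemize}
Finally, an asking clause's reduction maps to the Ask of the source goal, not to a Reduce, since no vmaGLP Reduce changes $H$.
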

\begin{proof}
Let $r = (c^1,t_1,\alpha_1), (c^2,t_2,\alpha_2), \ldots$ with $(c^1,t_1,\alpha_1) = (c^0,0,\emptyset)$.  Throughout, a variable of $I_a$ occurs nowhere before the transition that introduces it, by the properness of the maGLP computation $c^1, c^2, \ldots$ (Definition~\ref{def:sglp-run}); every clause of $N$ is SRSW (Definition~\ref{def:rc-glp-program}); and a person goal holds a variable of $I_a$ only as the person process holding an ask or as a relay, since the person process passes the term of each ask to the relay it spawns and a relay puts no variable it receives into the copy (Definition~\ref{def:relay}), so that the goals descended from the person procedures' goals hold the copies' variables and none of $I_a$.

\emph{Safety.}  We show that each transition $(c^i,t_i,\alpha_i) \rightarrow (c^{i+1},t_{i+1},\alpha_{i+1})$ of $r$ maps under $\sigma$ to a transaction of the vmaGLP transition system over $P$ and $M$ or to a stutter; then $\sigma(r)$ is a safe computation from $\sigma(c^0)$ (Definitions~\ref{def:rc-ts} and~\ref{def:rc-impl}).  A Release has $c^{i+1} = c^i$: a stutter.  A Machine transition is a Reduce, a Communicate or a Cold-call of the maGLP transition system over $P$ and $N$ (Definition~\ref{def:sglp-ts}), and we take each by what it reduces or delivers.

(1) A Reduce at $a$ of a platform goal with an asking clause of $\lceil M\rceil$ replaces $q(S_1,\ldots,S_n,D)$ by $q_1(S'_1,\ldots,S'_n,X?,D')$ or $q_1(S'_1,\ldots,S'_n,X,D')$, $X$ fresh, and adds to $\sigma_a$ the assignment of $D$ to the list of the ask \verb|ask(|$T$\verb|, |$t$\verb|(|$E$\verb|))| and the tail $D'?$, $E$ the other end of $X$, and the assignments by which each $S'_l$ passes its argument on (Definition~\ref{def:rc-compilation}).  In the image, $G$ loses $q(S_1,\ldots,S_n)$ and gains the asked goal, read through the pairs of its arguments and with the identifier of the goal reduced; $H$ gains $E$, a variable of $I_a$ that now occurs in an ask in $\sigma_a$; $\sigma_r$ is unchanged, $D$ being an ask-stream variable, the assignments of the pairs being outside it, and the assignment $W? := S_l?$ of an output position, read through, being absent while $S_l?$ is unassigned (Definition~\ref{def:simulation-sigma}).  This is the Ask of $a$'s goal with interactive variable $X$.

(2) A Reduce at $a$ of a platform goal with a compiled clause of $M$: $G$ changes as the Reduce of the source goal with the source clause does (Lemma~\ref{lem:rc-correspondence}), the platform goals and clauses of $N$ being those of $\lceil M\rceil$ and read back as there; $\sigma_r$ gains the assignments the source reduction adds, read through the pairs as there, an assignment to a variable of $I_a$ at a position the program writes, whose reader a relay holds, among them where the clause writes into a question; and $H$ is unchanged, since the reduction touches no person goal and no ask, and a variable it introduces occurs in neither.  This is the source Reduce.  A Reduce of a platform goal with a \verb|merge| clause the compilation adds, which moves asks between ask streams, changes nothing $\sigma$ reads: a stutter.

(3) A Reduce at $a$ of a person goal that adds no answer to a variable of $I_a$ --- the person process, a goal descended from it other than a relay, or a relay that assigns the copy, gives a writer's place to another, or assigns nothing received: $G$ and $\sigma_r$ are unchanged, a writer that takes another's place being read as its question; and so is $H$: a variable of $I_a$ in the goal reduced passes to a goal the reduction spawns, directly or by an assignment it adds for a reader that goal holds, by SRSW for the person process, whose clause passes the term of an ask to the relay it spawns, and by Definition~\ref{def:relay} for a relay, and any other person goal holds none; and a variable of $I_a$ the reduction introduces takes the place of a writer the goal held, and is read as its question.  A stutter.  But where the relay gives the place of a writer $X$ of $I_a$ that is not open in $c^i$ to another, that one is not in $I_a$ (Definition~\ref{def:sglp-goals}), so $H$ loses $X$, read as its question, and $G$ and $\sigma_r$ are unchanged: the Dismiss of $X$'s question by $a$.

(4) A Reduce at $a$ of a person goal that adds an assignment $X? := u$ with $X \in I_a$.  The goal is a relay, no other person goal holding a writer of $I_a$, and by Definition~\ref{def:relay} the assignment is the one answer of the reduction, $X$ a writer at a position the person writes that the relay received or that takes the place of one, read as its question, and $u$ is a term of the type of $X$, a list cell or \verb|[]| where $X$ is of a stream or list type, ground but for fresh variables: readers $Y_1?,\ldots,Y_k?$ at the positions the person writes of type \verb|Stream(|$T'$\verb|)| or \verb|List(|$T'$\verb|)|, but where the copy ends the list there with \verb|[]|, and writers $W_1,\ldots,W_m$ at the outermost positions the program writes, $Y_1,\ldots,Y_k$ and $W_1?,\ldots,W_m?$ passed on to relays of $a$.  If $X$ is not open in $c^i$ (Definition~\ref{def:sglp-goals}): $X \in H$ before, occurring in the relay reduced, and not after; the variables of $u$ are not in $I_a$ (Definition~\ref{def:sglp-goals}), so $H$ gains none; $\sigma_r$ does not take the assignment; $G$ is unchanged.  This is the Dismiss of $X$ by $a$.  Otherwise $X \in H$ before and not after; the variables of $u$ are in $I_a$, so $H' = H \setminus \{X\} \cup \{Y_1,\ldots,Y_k,W_1?,\ldots,W_m?\}$; $\sigma_r$ gains $X? := u$; and $G$ is unchanged.  This is the Answer of $a$ to $X$ with $u$ (Definition~\ref{def:rc-vmaglp}).

(5) A Communicate at $a$ delivering an ask, or a list of asks, to a \verb|merge| goal the compilation adds or to the person process, or delivering to a person goal an assignment other than an answer that a Reduce of a person goal added: the variables of $I_a$ it moves are in $H$ before and after, and $G$ and $\sigma_r$ are unchanged: a stutter.  A Communicate, within $a$ or from another agent, delivering to a relay of $a$ an assignment $X? := u$ that $\sigma_r$ holds, $X?$ held by the relay: $X?$ leaves $H$, and the variables of $u$, in $I_a$, and the assignment, shown, enter it; the sender's $\sigma_r$ loses the assignment; $G$ is unchanged.  This is the Communicate of the assignment to $a$'s person (Definition~\ref{def:rc-vmaglp}).  A Communicate of an assignment an asking clause's reduction made, or of one by which a writer takes another's place, replaces a variable by what $\sigma$ already read it as: a stutter.  A Communicate delivering to a platform goal an assignment added by a Reduce of a platform goal, within $a$ or from another agent, or one added by a relay to a variable of $I_a$: the source Communicate, $G$ and $\sigma_r$ changing as there and $H$ not at all.  A Communicate delivering to a person goal an assignment of a variable that is not in $I_a$: a stutter.

(6) A Cold-call from $a$ to $b$ moves a message between the network streams, which are arguments of platform goals: the source Cold-call.

\emph{Every Answer and every Dismiss is the image of a Reduce of a person goal}: by the cases, each is the image of a Reduce of a relay, of case (4) or, for a Dismiss, of the last sentence of case (3), and of no other.

\emph{Properness.}  Let a variable occur in the source goals of $a$ in $\sigma(c^{i+1})$ and not in $\sigma(c^i)$.  Read as in $G$, it occurs in a platform goal of $a$ in $c^{i+1}$ other than a \verb|merge| goal the compilation adds, and in no goal of $a$ in $c^i$: a variable of such a goal in $c^i$ is one of its source, read as in $G$; the variable of an ask reaches only \verb|merge| goals, the person process and a relay; and a variable a person goal holds reaches no platform goal, the variables of an answer being fresh and a writer that takes another's place being read as its question.  So it occurs in no goal of $a$ earlier, by the properness of the maGLP computation, hence in no $H$ earlier, the variables of $H$ occurring in goals of $a$ or in terms of $\sigma_a$ assigned to ask-stream variables or to readers that person goals of $a$ hold.  The clause each source Reduce renames is, but for its ask streams, the compiled clause that the Reduce of $r$ it is the image of renames apart from every agent's goals and readers substitution in $c^i$ and in every earlier configuration (Definitions~\ref{def:sglp-run} and~\ref{def:rc-proper}), so from every agent's source goals, held variables and readers substitution in $\sigma(c^i)$ and earlier.  Hence $\sigma(r)$ is proper (Definition~\ref{def:rc-proper}).
\end{proof}

\begin{corollaryrestated}
\runstatement
\end{corollaryrestated}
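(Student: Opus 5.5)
The corollary will follow from Theorem~\ref{thm:soundness}; the only extra fact needed is that $\sigma(c^0)$ is the initial configuration of the vmaGLP transition system over $P$ and $M$. A \emph{run} is a computation from the initial configuration (Appendix~\ref{app:recalled}). So once $\sigma(c^0)$ is identified with that configuration, the theorem's proper safe computation from $\sigma(c^0)$ is a proper safe run.

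\textbf{Step 1: the goals of $\sigma(c^0)$.} At each $a\in P$ the resolvent of $c^0$ is $(\{g'_a,\ \mathit{person}_{\pi(a)}(U_a?, s_a)\},\emptyset)$ (Definition~\ref{def:simulation-program}). Here $g'_a$ is the compiled form of \texttt{agent(ch(\_?,\_),ch(\_?,\_))}, extended with the ask stream $U_a$. By Definition~\ref{def:simulation-sigma} it is a platform goal, and reading it back drops the ask stream, so $G$ is the singleton multiset of the source goal $g_a$. The person goal contributes nothing to $G$.

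\textbf{Step 2: the readers substitution and the held variables.} The readers substitution of $c^0$ is empty, so $\sigma_r=\emptyset$. No transition has yet occurred, so no asking clause has reduced; hence $a$ has no interactive variables and $I_a$ is empty. Consequently $H$ contains no variables, and the set of delivered assignments in $H$ is empty as well. Each agent state of $\sigma(c^0)$ is therefore $(\emptyset,\{g_a\},\emptyset)$.

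\textbf{Step 3: matching the recalled definition.} It remains to check that $(\emptyset,\{g_a\},\emptyset)$ at each $a$ is exactly the initial configuration of the vmaGLP transition system over $P$ and $M$ with initial goal \texttt{agent(ch(\_?,\_),ch(\_?,\_))}, as recalled in Appendix~\ref{app:recalled}. This is the step I expect to be the main obstacle. It depends on how vMaGLP fixes its initial configuration: whether the initial goal at each agent is literally this goal with fresh anonymous variables, and whether the network channels \texttt{ch(\_?,\_)} are shared between agents or created per agent. First I would check that the fresh variables in each $g_a$ can be renamed to those of the recalled initial configuration; this is harmless, since properness is preserved under renaming apart. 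Then I would observe that the variables of $g'_a$ other than $U_a$ are exactly those of $g_a$, which gives the match. With $\sigma(c^0)$ identified, the corollary follows directly from the theorem.
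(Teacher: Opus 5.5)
Your proposal is correct and follows the paper's proof: compute $\sigma(c^0)$ as $(\emptyset,\{g_a\},\emptyset)$ at each agent, recognise this as the initial configuration of the vmaGLP transition system, and read Theorem~\ref{thm:soundness}'s proper safe computation from $\sigma(c^0)$ as a proper safe run. Two small corrections.

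First, $I_a$ is defined over the whole run $r$ (Definition~\ref{def:sglp-goals}) and need not be empty. What Step~2 needs is the paper's fact that no variable of $I_a$ occurs where $H$ looks in $c^0$: the person goal $\mathit{person}_{\pi(a)}(U_a?, s_a)$ holds none, $\sigma_a=\emptyset$, and no Communicate has yet occurred.

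Second, Step~3 is no obstacle. Definition~\ref{def:rc-vmaglp} fixes the initial local state at each agent as literally $(\emptyset,\{g_a\},\emptyset)$, where $g_a$ is the hypothesis's goal \texttt{agent(ch(\_?,\_),ch(\_?,\_))}. So no renaming or analysis of the channels is needed.
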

\begin{proof}
In $c^0$ agent $a$ holds the platform goal $g'_a$ and the person goal $\mathit{person}_{\pi(a)}(U_a?, s_a)$, which holds no variable of $I_a$, and $\sigma_a = \emptyset$, so $\sigma(c^0)$ has at each $a$ the agent state $(\emptyset, \{g_a\}, \emptyset)$, which is the initial local state of the vmaGLP transition system over $P$ and $M$ where $g_a$ is \texttt{agent(ch(\_?,\_),ch(\_?,\_))} (Definition~\ref{def:rc-vmaglp}); so $\sigma(c^0)$ is its initial configuration, and $\sigma(r)$, a proper safe computation from it by Theorem~\ref{thm:soundness}, is a proper safe run (Definition~\ref{def:rc-ts}).
\end{proof}

\section{Implementation Notes}\label{app:notes}

The decisions the implementation takes where Sections~\ref{sec:sglp}, \ref{sec:simulation} and~\ref{sec:implementation} leave a choice open, recorded here so that a reader of the code finds them stated.  The code is \verb|programs/sglp| \full{of the GLP repository}{in the supplementary material}: \verb|monitor.glp|; \verb|transform.glp|, the transformation, and \verb|transform.sh|, which runs it; and a directory per example with the vGLP program (\verb|graph.vglp|, \verb|coins.vglp|), its canonical compilation by hand, \verb|graph.glp| or \verb|coins.glp|, clause for clause as the README records, its sGLP source in the term form (\verb|graph_sglp.glp|, \verb|coins_sglp.glp|), the two modules the transformation prints from it --- \verb|profiles.glp| for the profiles and \verb|population.glp| for the population --- its harness, its run script and its reader of the log; and the tests, \verb|test_sglp.sh| and \verb|tests/|.

\begin{itemize}
\item The clock is in simulated seconds; a year is $365.25$ days, and the units of \verb|until| and of a rate are second, minute, hour, day, week and year, singular or plural after \verb|until|.  The horizon is inclusive: the clock never passes it, and the monitor stops when the least activation time would, binding every pending token to \verb|stop|, which ends the rated goal waiting on it, or when the queue is empty at rest, and writes \verb|clock(T)| last.
\item The monitor's queue is a skew heap on the pair of activation time and sequence number, so two requests with one time are released in the order queued.  A delay is $-\ln U$ times the unit's seconds over the rate's number, $U = K/(2^{31}-1)$, $K$ in $1\ldots2^{31}-2$ from \verb|random|, the monitor's seed threaded through its draws.
\item Seeds: the monitor draws from the run's seed; a second generator, started from the run seed's first draw, is cut into blocks of $4096$ draws, and from agent $a$'s block are drawn, in order, its profile per dimension and the seed of its person process, which draws for each person goal a value in $1\ldots2^{31}-2$ and hands it as the goal's seed, continuing with its own next seed: the goal's draws and the process's are on different seed chains, where handing the process's next seed itself put a goal's second draw and the next goal's first on one seed, GLP's \verb|random| making the next seed a function of the seed alone~\cite{shapiro2025glp}.  The harness reaches the next block by $4096$ draws made one after the other, not by a closed form, since GLP fixes only that the next seed is a function of the seed and not the generator.  The run's seed plus the agent's number was not used, since consecutive agents' first draws then step by a constant.  Each run's draw --- how many agents of each profile, and which agent has which --- is computed again from the seed, without the run, and written beside the log.
\item The agents are numbered $1,\ldots,n$, $n$ even and at least $4$, so that both lists of each agent of the social graph are non-empty, as \verb|Peers| requires; in the social graph the first half are of one sex and the rest of the other, and each agent's lists are in increasing order.  Placement is attribution: an agent's goals are its \verb|start| goal, its person process and what they spawn; the harness's goals belong to no agent.
\item The person process reads the ask stream of the agent's initial goal, in the dispatcher's place, and its answer and present procedures are the relay of each question (Definition~\ref{def:relay}): a person goal answers into its copy of the question, and the answer procedure hands the program's variable the answer once the positions the person writes are ground but for those of a stream or a list, which it hands over one element at a time, the outermost positions the program writes as fresh writers whose readers the person process keeps, a step after the person goal's; the program's term reaches the person goal the same way, by the present procedures, once the positions the program writes are ground, the person's positions inside it open; and a term crossing at any depth, a question inside an answer included, crosses the same way, so that every answer is logged.  Two positions the person writes in one alternative are two questions, each with its own answer procedure and log line, and a list the program writes inside a question is shown one element at a time, as an answer of a stream type is given; and the person process asks the monitor for the clock by a request \verb|time(T)|, answered at once, to log it.
\item The monitor's input and the log's sink are mutual references allocated by the root's \verb|allocate_mutual_reference/2| and written by \verb|stream_append/3|; the network is the root's \verb|mwm/2| over the agents' output streams and a balanced tree of \verb|deliver/4| goals.  The log's sink is read by \verb|send_to_user/1| and written tab-separated by the harness.
\item \verb|one_of| and the \verb|nth| it calls take a non-empty list, of the type \verb+NonEmpty(X) ::= [X | List(X)]+, and each has a clause for every list of that type, as the type checker requires of every procedure~\cite{shapiro2026types}.
\item The log is one line per answer: the simulated time in seconds, the agent, the interactive type and the term handed over, the positions the program writes as variables, separated by tabs.
\item The transformation prints one term a line, operators prefix, and chooses a question's person procedure by the functor of its term, so the interactive types of a program must have distinct functors, which it checks; each dimension's draw is printed with the population.  \verb|one_of| is shared by the two programs.
\item Coins among friends: the messages are \verb|pay|, \verb|coins|, \verb|propose| and \verb|decline|.  A wallet keeps its holdings in order, a new issuer's at the end, and a holding that reaches zero is removed; the order decides only which holding a request draws.  A pay of a friend whose coins the wallet does not hold would send zero coins and change nothing, and does not arise, since a request pays an issuer its menu shows.  While a card waits for its answer the agent takes neither a message nor a request; after a yes the wallet is rebuilt before either is taken, and a request answered while a card waited is acted on after the card; after a pay the agent asks again once the wallet after the pay is known.  The harness takes each agent's friends from the friendship graph of Section~\ref{sec:example}'s run of the same seed and refuses a graph on which an agent has no friend, an agent's friends being of the non-empty type \verb|Peers|.  The circulation and the holdings of Section~\ref{sec:currencies} are replayed from the log, each move at the answer the log shows for it; no move was deferred across a month's end in the runs reported.
\end{itemize}

\section{The Monitor: Gillespie's Algorithm in GLP}\label{app:monitor}

The monitor is the implementation of the core of Gillespie's algorithm~\cite{gillespie1977exact}, in the first reaction form of Section~\ref{sec:sglp}, written in GLP over the guard \verb|when_idle|.  This is the whole module as it is \full{in the repository}{in the supplementary material}, \verb|programs/sglp/monitor.glp|, its comments replaced by this text; the transformation, the two examples, their harnesses and their tests are in the same directory.

A request is \verb|rated(R, Tok)|, \verb|R| a rate \verb|N/U| and \verb|Tok| a token the monitor binds to \verb|go| at Release and to \verb|stop| when the run ends, or \verb|time(T)|, answered at once with the clock.  The queue is a skew heap: a node holds an activation time, a sequence number, the token and two subheaps.

{\footnotesize\begin{verbatim}
TimeUnit ::= second ; minute ; hour ; day ; week ; year.
Rate     ::= /(Number, TimeUnit).
Token    ::= go ; stop.
Request  ::= rated(Rate, Token?) ; time(Number?).
Queue    ::= nil ; node(Number, Integer, Token?, Queue, Queue).
Clock    ::= clock(Number).
\end{verbatim}}

The monitor goal holds its input stream, the clock in seconds, the queue, its seed, the next sequence number, the horizon in seconds and the log's reference, and starts with the clock at zero and the queue empty.

{\footnotesize\begin{verbatim}
exported procedure monitor(Stream(Request)?, Integer?, Number?,
                           MutualRef?).
monitor(In, Seed, Horizon, Log) :-
    monitor(In?, 0, nil, Seed?, 0, Horizon?, Log?).

procedure monitor(Stream(Request)?, Number?, Queue?, Integer?,
                  Integer?, Number?, MutualRef?).
monitor([rated(R, Tok?)|In], Clock, Q, Seed, Seq, Horizon, Log) :-
    random(Seed?, 2147483646, K, Seed1),
    delay(R?, K?, D),
    T := Clock? + D?,
    Seq1 := Seq? + 1,
    qmerge(node(T?, Seq?, Tok, nil, nil), Q?, Q1),
    monitor(In?, Clock?, Q1?, Seed1?, Seq1?, Horizon?, Log?).
monitor([time(Clock?)|In], Clock, Q, Seed, Seq, Horizon, Log) :-
    monitor(In?, Clock?, Q?, Seed?, Seq?, Horizon?, Log?).
monitor(In, _, node(T, _, go, L, R), Seed, Seq, Horizon, Log) :-
    when_idle, T? =< Horizon? |
    qmerge(L?, R?, Q),
    monitor(In?, T?, Q?, Seed?, Seq?, Horizon?, Log?).
monitor(_, Clock, node(T, S, K?, L, R), _, _, Horizon, Log) :-
    when_idle, T? > Horizon? |
    stop_all(node(T?, S?, K, L?, R?)),
    stream_append(clock(Clock?), Log?, _).
monitor(_, Clock, nil, _, _, _, Log) :-
    when_idle |
    stream_append(clock(Clock?), Log?, _).

procedure stop_all(Queue?).
stop_all(nil).
stop_all(node(_, _, stop, L, R)) :-
    stop_all(L?), stop_all(R?).
\end{verbatim}}

A rated request is assigned its activation time at once, the clock plus a delay drawn from the exponential distribution with its rate, and queued; this is the first reaction method, the time drawn when the goal is spawned.  A \verb|time| request is answered from the clock.  The three clauses guarded by \verb|when_idle| are taken only when the machine rests, so every request sent has arrived and the queue is that of the configuration: Release moves the clock to the least activation time, within the horizon, and binds that request's token to \verb|go|; the two stop clauses write the clock to the log when the least activation time is past the horizon or nothing is pending, the first after binding every pending token to \verb|stop| by \verb|stop_all|, which ends each rated goal still waiting.  The clauses are tried in order, so a request that has arrived is queued before the machine is found idle.

{\footnotesize\begin{verbatim}
procedure delay(Rate?, Integer?, Number).
delay(N/U, K, D?) :-
    number(N?), N? > 0 |
    seconds(U?, S),
    P := K? / 2147483647,
    L := ln(P?),
    D := neg(L?) * S? / N?.

procedure seconds(TimeUnit?, Integer).
seconds(second, 1).
seconds(minute, 60).
seconds(hour, 3600).
seconds(day, 86400).
seconds(week, 604800).
seconds(year, 31557600).
\end{verbatim}}

The delay of a rate of \verb|N| per unit \verb|U| is $-\ln(K/(2^{31}-1))\cdot S/N$ seconds, $S$ the seconds in \verb|U| and $K$ uniform in $1\ldots2^{31}-2$ from the monitor's seed, which is threaded through its draws; a year is $365.25$ days.

{\footnotesize\begin{verbatim}
procedure qmerge(Queue?, Queue?, Queue).
qmerge(nil, Q, Q?).
qmerge(node(T, S, K?, L, R), nil, node(T?, S?, K, L?, R?)).
qmerge(node(T1, S1, K1?, L1, R1), node(T2, S2, K2?, L2, R2),
       node(T1?, S1?, K1, M?, L1?)) :-
    T1? < T2? |
    qmerge(R1?, node(T2?, S2?, K2, L2?, R2?), M).
qmerge(node(T1, S1, K1?, L1, R1), node(T2, S2, K2?, L2, R2),
       node(T1?, S1?, K1, M?, L1?)) :-
    T1? =:= T2?, S1? < S2? |
    qmerge(R1?, node(T2?, S2?, K2, L2?, R2?), M).
qmerge(node(T1, S1, K1?, L1, R1), node(T2, S2, K2?, L2, R2),
       node(T2?, S2?, K2, M?, L2?)) :-
    T2? < T1? |
    qmerge(R2?, node(T1?, S1?, K1, L1?, R1?), M).
qmerge(node(T1, S1, K1?, L1, R1), node(T2, S2, K2?, L2, R2),
       node(T2?, S2?, K2, M?, L2?)) :-
    T2? =:= T1?, S2? < S1? |
    qmerge(R2?, node(T1?, S1?, K1, L1?, R1?), M).
\end{verbatim}}

\verb|qmerge| merges two skew heaps, the least pair of activation time and sequence number at the root, so that two requests with one activation time are released in the order they were queued; the token writer of each node is passed on with the node.

\section{Related Work, Continued}\label{app:related}

Section~\ref{sec:related-work}'s six lines of work, each expanded.

\mypara{Calculi and languages with rates} BioSPI implemented the biochemical stochastic pi-calculus in Flat Concurrent Prolog~\cite{shapiro1989family}: each channel is an object carrying a base rate, processes post send, receive, dimer and withdraw requests to it, a monitor and the channel objects jointly compute the weighted rates, and the monitor draws the channel, completes one transmission, withdraws the losing offers of the two chosen processes and advances a clock~\cite{regev2000representation,priami2001application,silverman1988logix}.  Bio-PEPA~\cite{ciocchetta2009biopepa} is a process algebra for biochemical networks in the PEPA family, with stochastic simulation by Gillespie's method among its analyses; stochastic concurrent constraint programming~\cite{bortolussi2008stochastic} attaches a rate to each action of a concurrent constraint program, with a continuous-time Markov chain semantics simulated by Gillespie's method, which makes it the nearest language to sGLP: a concurrent constraint language with rates.  The difference is where the rate sits and what is simulated: in sCCP every action is rated and the program is the model, in sGLP the platform's program keeps vGLP's semantics unchanged and only the person's goals are rated.

\mypara{Two criticisms of BioSPI's implementation} The stochastic pi-machine makes the cost of a simulation depend on the number of species rather than the number of molecules, by grouping identical processes and counting the copies, and reports its gene example running in ten seconds where their own previous version of SPiM took eight minutes and BioSPI twenty-one; the criticism is that the earlier simulators, BioSPI among them, describe no algorithm for keeping track of identical processes, the separate process per molecule being what they report of their previous version~\cite{phillips2007efficient}.  A later machine for stochastic process calculi carries the same point to restricted channels, and against the stochastic pi-machine itself: it stores each complex as one species, where a simulator in which every restricted channel has a globally fresh name carries one species and one reaction per copy~\cite{pauleve2010generic}.  The second criticism is of a different kind: the stochastic pi-machine is proved reduction-equivalent to the calculus and shown to preserve apparent rates, which they report as the first correctness proof for the stochastic pi-calculus to cover optimisations of this kind; their own earlier machine was proved correct against a variant of the calculus~\cite{phillips2007efficient}.

\mypara{What sGLP does about them} The first economy is not available to us, and not because of how we implement: two agents are interchangeable only while their machine states agree, and in a grassroots platform they stop agreeing as soon as they have different friends, so a population of a thousand people is a thousand states, no two with the same open questions.  What carries over is the rest of the machinery --- the next of $M$ pending goals is found in $O(\log M)$ by a priority queue, where the linear scan of the direct method costs $O(M)$ and a tree of propensities $O(\log M)$~\cite{masuda2022gillespie,koster2024fast} --- together with one thing a vGLP program has for nothing: a reduction changes the resolvent of one agent, so an event touches the questions of that agent and of no other.  Their grouping also needs every choice to carry an identifier, and where a choice has none their encoding creates an extra definition for it; a person declaration names an interactive type, so we have the identifier already~\cite{phillips2007efficient}.  They also record a restricted channel's apparent rate as a delay, to keep the machine compositional where several machines run in parallel, on distributed or multi-core systems~\cite{phillips2007efficient}; that is the setting of a deployed platform, one machine per agent, and not yet of our simulator, which hosts the population in one machine (Section~\ref{sec:implementation}).  The second criticism we accept as the standard to meet, and meet it in the same form: the runs of the simulator are runs of the vmaGLP transition system of the very program being simulated, with the person's answers drawn rather than given.

\mypara{Rule-based languages} Kappa's simulator~\cite{danos2007scalable} keeps a superset of the events that may happen next, rejects the draws that do not correspond to a rule application, corrects the time advance so that the timings stay exact, and wakes only the rules a static activation relation names; its cost per event is independent of the number of species and of the number of agents, and logarithmic in the number of rules.  In sGLP the rules are the interactive types, of which a platform has few, and the instances are the questions open at the agents, of which a population has many; the activation relation it computes statically we have from the semantics, since a reduction reaches one agent.

\mypara{Continuous-time agent-based simulation} In ML3~\cite{koster2024fast} agents cannot be grouped into subpopulations, so handling the transitions efficiently is what decides the cost, and the dependency graph is dynamic where a biochemical one is static.  DEVS~\cite{zeigler2000theory} is the formalism of discrete-event simulation --- atomic models with time advances, coupled into networks --- on which ML3 and the discrete-event simulators of agents rest; sGLP's Release is a discrete event in that sense, and its machine steps between events take no time.

\mypara{Agent-based platforms, and testing in simulation} In NetLogo, Repast, MASON, Mesa and krABMaga~\cite{antelmi2024reliable}, and in agent-based modelling at large, the clock is a discrete step: the choice of step size bears on the validity of the results, and an alternative that is retracted when another decision wins --- a stochastic race --- has no formulation~\cite{koster2024fast}.  A platform's agent is the whole of what is modelled, where a vGLP agent is a person and a machine, and only the person is drawn.  Baiardi, Burattini, Ciatto and Pianini~\cite{baiardi2026testing} argue that a multiagent system should be tested in simulation on the very specification that is deployed, with no surrogate, and map the control flow of BDI agents onto a discrete-event simulator at several granularities; the demand is ours too, and what is tested differs: their agents are the system under test and their environment is the simulator's, where in sGLP the system under test is a vGLP program whose agents are people and machines, and the simulation replaces the people.

\mypara{Simulated people with profiles} Generative agents~\cite{park2023generative} are simulated people whose memory, reflection and planning are done by a language model from a persona written as text, in a sandbox town; OASIS~\cite{yang2024oasis} scales such agents to a million on a simulated social platform, with recommendation and the platform's actions modelled; user simulation in information retrieval~\cite{balog2024user} evaluates search and recommendation systems by simulated users, and PersonaEval~\cite{liu2026personaeval} evaluates interactive applications by persona-based simulated users driven by language models.  All simulate the person by a persona and a model; none runs the deployed program, and none specifies the person's behaviour stochastically, with a stated law.  An sGLP profile is the person's behaviour as a program with rates, typed by the dual of the question it answers, so that the simulation is a run of the deployed program (Theorem~\ref{thm:soundness}), drawn from a seed and reproducible; it lacks the persona's richness, and the rates and the profiles are to be estimated from how people use the deployed platforms.

\mypara{What the examples study} McPherson, Smith-Lovin and Cook review homophily across the dimensions of social life, sex, age, race, religion, education and the rest, and find it structuring every kind of tie~\cite{mcpherson2001birds}.  In Currarini, Jackson and Pin's model agents of several types form friendships by meeting, with a preference bias for their own type, and they derive how the share of same-type friendships and the number of friends vary with a group's relative size and its bias, which they fit to the Add Health data~\cite{currarini2009economic}; sGLP's homophile and indifferent profiles are the bias, and Table~\ref{tab:social-graph-mix} and Figure~\ref{fig:social-graph-matrix} are what the deployed platform does under it, not a model of it.  On mutual credit, Iosifidis et al.\ analyse the cyclic motifs of the Sardex transaction network, where credit returns to its issuer around cycles~\cite{iosifidis2018sardex}; Fleischman, Dini and Littera measure the liquidity obligation-clearing and mutual credit save on that network's data~\cite{fleischman2020liquidity}; Littera et al.\ describe the system's design and history~\cite{littera2017sardex}; and Schraven models why a LETS need not collapse under opportunism~\cite{schraven2001mutual}.  None simulates the deployed program; Section~\ref{sec:currencies}'s runs are of coins among friends as the platform's program executes it, with the people drawn.

\mypara{The exponential assumption} The non-Markovian Gillespie algorithm simulates coupled renewal processes by a first-order expansion accurate for large numbers of processes, and the Laplace Gillespie algorithm simulates exactly those whose inter-event distribution is a mixture of exponentials, which covers the fat-tailed distributions common in human interaction~\cite{masuda2022gillespie}; a rate per act is a modelling assumption and not a fact about people.

\section{Definitions and Results Recalled from vGLP}\label{app:recalled}

Sections~\ref{sec:sglp} and~\ref{sec:simulation} and Theorem~\ref{thm:soundness} rest on the following definitions and results of vGLP~\cite{shapiro2026volition}, which rest in turn on GLP~\cite{shapiro2025glp} and its types~\cite{shapiro2026types}; they are recalled as stated there, each named by its title in that paper.  Reduction, the writer mgu, goal identity, the transaction closure and the transactions-based multiagent transition system, which the maGLP and vmaGLP transition systems are defined over, and the construct process and the dispatcher, which the canonical compilation names, are those of~\cite{shapiro2025glp,shapiro2026volition} and are not recalled; a recalled definition names each by its title in that paper.

\begin{definition}[Transition System, Computation, Run, Safe, Live, Correct~\cite{shapiro2021multiagent,lewis2026volitional,shapiro2026volition}]\label{def:rc-ts}
A \temph{transition system} is a tuple $TS = (C, c_0, T, {\sim})$ where $C$ is an arbitrary set of \temph{configurations}, $c_0 \in C$ a designated \temph{initial configuration}, $T \subseteq C \times C$ a set of \temph{transitions}, each a pair $c \rightarrow c'$ of non-identical configurations $c \ne c' \in C$, and ${\sim}$ a partial equivalence relation on $T$ --- an equivalence on a subset $\mathit{dom}({\sim}) \subseteq T$; we write $[t]$ for the class of $t \in \mathit{dom}({\sim})$ under ${\sim}$.

A \temph{computation} is a (nonempty, finite or infinite) sequence of configurations $c_1, c_2, \ldots$; it is a \temph{run} if $c_1 = c_0$, and \temph{safe} if $c_i \rightarrow c_{i+1} \in T$ for every two consecutive configurations. We write $c \xrightarrow{*} c'$ for the existence of a safe computation from $c$ to $c'$ (empty if $c = c'$). A class $[t]$ is \temph{enabled} in $c$ if $c \rightarrow c' \in [t]$ for some $c'$. A run is \temph{live} if no class $[t]$, $t \in \mathit{dom}({\sim})$, is enabled in every configuration of some suffix in which no member of $[t]$ occurs, and \temph{correct} if it is safe and live.
\end{definition}

\begin{definition}[Implementation (Refinement)~\cite{shapiro2021multiagent,shapiro2026implementing,lewis2026volitional,shapiro2026volition}]\label{def:rc-impl}
Given two transition systems $TS = (C, c_0, T, {\sim})$, the \temph{specification}, and $TS' = (C', c'_0, T', {\sim'})$, an \temph{implementation of $TS$ by $TS'$} is a function $\sigma : C' \rightarrow C$ with $\sigma(c'_0) = c_0$, in which case the pair $(TS', \sigma)$ is \temph{an implementation of $TS$}.  Given a computation $r' = c'_1, c'_2, \ldots$ of $TS'$, $\sigma(r')$ is the computation $\sigma(c'_1), \sigma(c'_2), \ldots$ with consecutive repetitions removed; a transition $c' \rightarrow d'$ of $TS'$ with $\sigma(c') = \sigma(d')$ is a \temph{stutter}, contributing no step to $\sigma(r')$.
\end{definition}

\begin{definition}[Variables, Terms, Goals, Clauses~\cite{shapiro2025glp,shapiro2026volition}]\label{def:rc-variables}
Let $\calV$ denote the set of LP variables (identifiers beginning with uppercase), henceforth called \temph{writers}, and $\calV? = \{X? \mid X \in \calV\}$ the \temph{readers}; $\hat\calV = \calV \cup \calV?$, and a writer $X$ and its reader $X?$ form a \temph{variable pair}.  Terms, unit goals, goals, and clauses are as in LP, over $\hat\calV$.  A term, goal, or clause satisfies the \temph{single-occurrence (SO) invariant} if every variable occurs in it at most once, and a clause satisfies the \temph{single-reader/single-writer (SRSW) restriction} if it satisfies SO and a variable occurs in it iff its paired variable does.
\end{definition}

\begin{definition}[Guarded Clause, Procedure, GLP Program~\cite{shapiro2025glp,shapiro2026volition}]\label{def:rc-glp-program}
A \temph{guarded clause} has the form $H$ \verb|:-| $G$ \verb"|" $B$, where $H$ is the head, $G$ a conjunction of guard predicates, and $B$ the body; guard arguments are readers paired to head writers.  A \temph{GLP program} is a finite sequence of guarded clauses, each satisfying SRSW; the clauses for the same predicate form a \temph{procedure}.  The set of goals $\hat\calG(M)$ of a program $M$ includes all goals over $\hat\calV$ and the vocabulary of $M$ that satisfy SO.
\end{definition}

\begin{definition}[maGLP Transition System~\cite{shapiro2025glp,shapiro2026volition}]\label{def:rc-glp}
Given a GLP program $M$, an \temph{asynchronous resolvent} over $M$ is a pair $(G, \sigma)$ where $G \in \hat\calG(M)$ and $\sigma$ is a readers substitution.  Given agents $P\subset \Pi$, the \temph{maGLP transition system} over $P$ and $M$ is the transactions-based multiagent transition system (Definition ``Transactions-Based Multiagent Transition System'' of~\cite{shapiro2026volition}) over $P$, local states being asynchronous resolvents, initial local state $(\{\verb|agent(ch(_?,_),ch(_?,_))|\}, \emptyset)$, the transactions given by the rules below, and the equivalence ${\sim}$ below.  Each rule is read as in an operational semantics: above the line its condition, below it the transaction, from its participants' local states to their states after it, every other agent stationary; the prose before each rule says what the premise names.
\begin{description}
\item[Reduce $p$.]  $A \in G_p$ is a unit goal, and $C \in M$ is the first clause for which the reduction of $A$ with $C$ (Definition ``Reduction'' of~\cite{shapiro2026volition}) succeeds, with result $(B, \hat\sigma)$.
\[
\frac{A \in G_p \qquad C \text{ the first clause reducing } A, \text{ with result } (B, \hat\sigma)}
     {(G_p,\, \sigma_p) \;\rightarrow\; ((G_p \setminus \{A\} \cup B)\hat\sigma,\; \sigma_p \circ \hat\sigma?)}
\]
\item[Communicate $p$ to $q$.]  Participants $p, q \in P$, possibly $p = q$: $\{X? := T\} \in \sigma_p$ and $X?$ occurs in $G_q$.
\[
\frac{\{X? := T\} \in \sigma_p \qquad X? \text{ occurs in } G_q}
     {(G_p,\, \sigma_p),\ (G_q,\, \sigma_q) \;\rightarrow\; (G_p,\, \sigma_p \setminus \{X? := T\}),\ (G_q\{X? := T\},\, \sigma_q)}
\]
\item[Cold-call $p$ to $q$.]  Participants $p \ne q \in P$: the network output stream in $(G_p, \sigma_p)$ has a new message \verb|msg|$(q,X)$; $(G'_p, \sigma'_p)$ is the result of advancing the network output stream and $(G'_q, \sigma'_q)$ the result of adding \verb|msg|$(q,X)$ to the network input stream.
\[
\frac{\text{the network output stream of } p \text{ has a new } \mathtt{msg}(q,X)}
     {(G_p,\, \sigma_p),\ (G_q,\, \sigma_q) \;\rightarrow\; (G'_p,\, \sigma'_p),\ (G'_q,\, \sigma'_q)}
\]
\end{description}
${\sim}$ is the equivalence relating $t_1 \sim t_2$ iff both are Reduce transactions at the same agent $p$ reducing the same goal (by identity, Definition ``Goal Identity'' of~\cite{shapiro2026volition}) with the same clause, both are Communicate transactions from the same $p$ to the same $q$ applying the counterpart of the same writer assignment to the same goal, or both are Cold-call transactions from the same $p$ to the same $q$ delivering the same message.
\end{definition}

\begin{definition}[Volitional Procedure, Interactive Type, Interactive Term, Ordinary Clause, vGLP Program~\cite{shapiro2026volition}]\label{def:rc-program}
A \temph{volitional procedure} $p$ of arity $n$ is declared \verb|procedure (|$T$\verb|)*|$p(T_1,\ldots,T_n)$\verb|.|, where $T_1,\ldots,T_n$ are the types of its arguments and $T$, in writer or reader mode as an argument type is, is its \temph{interactive type}.  An interactive type is \temph{flat}: it is built from constants, \verb|String|, \verb|Integer|, \verb|Real|, \verb|Key|, \verb|Constant|, \verb|Module|, named types so built, tuples and unions, and \verb|Stream| or \verb|List| as the whole type or as an argument of a tuple, never under another \verb|Stream| or \verb|List|; it has no other recursive type, and no type parameter at a position the person writes.  A clause of it has the form \verb|(|$A$\verb|)*|$p(S_1,\ldots,S_n)$ \verb|:-| $G$ \verb"|" $B$, where the \temph{interactive term} $A$ is a term of type $T$, possibly a variable but not the anonymous variable, and it is the guarded clause $p(S_1,\ldots,S_n,A)$ \verb|:-| $G$ \verb"|" $B$ (Definition~\ref{def:rc-glp-program}), of arity $n{+}1$, whose last argument is matched against the goal's end of its interactive variable (Definition~\ref{def:rc-vmaglp}).  A clause of a volitional procedure reads an argument at an input position, by a non-variable term at that position in its head or by a guard on it, only if every later clause of the procedure reads it too, and every call of a volitional procedure, in a clause body or in the initial goal, holds a writer at each output position.  A clause is \temph{ordinary} if it is not a clause of a volitional procedure.  A \temph{vGLP program} is a finite sequence of ordinary clauses and clauses of volitional procedures, each satisfying SRSW, a clause of a volitional procedure satisfying it as a guarded clause of arity $n{+}1$; the clauses for the same predicate form a \temph{procedure}.  The set of goals $\hat\calG(M)$ of a vGLP program $M$ includes all goals over $\hat\calV$ and the vocabulary of $M$ that satisfy SO, a goal of a volitional procedure of arity $n$ being $n$-ary until it is asked and of arity $n{+}1$ after (Definition~\ref{def:rc-vmaglp}).
\end{definition}

\begin{definition}[vmaGLP Transition System~\cite{shapiro2026volition}]\label{def:rc-vmaglp}
Given a vGLP program $M$, an \temph{agent state} is a triple $(H_p, G_p, \sigma_p)$ of a finite set $H_p$ of typed variables \temph{held} by the agent's person and of assignments \temph{shown} to them and an asynchronous resolvent $(G_p, \sigma_p)$ over $M$ (Definition~\ref{def:rc-glp}).  Given agents $P\subset \Pi$, the \temph{vmaGLP transition system} over $P$ and $M$ is the transactions-based multiagent transition system (Definition ``Transactions-Based Multiagent Transition System'' of~\cite{shapiro2026volition}) over $P$, local states being agent states, initial local state $(\emptyset, \{\verb|agent(ch(_?,_),ch(_?,_))|\}, \emptyset)$, the transactions being the Reduce, Communicate and Cold-call transactions of maGLP and the three rules below, read as those of Definition~\ref{def:rc-glp}, and the equivalence ${\sim}$ below.  Reduce applies to a unit goal that is not an $n$-ary goal of a volitional procedure, a clause of a volitional procedure taken as its guarded clause (Definition~\ref{def:rc-program}).  Communicate delivers an assignment $\{X? := T\}\in\sigma_p$ to a reader $X?\in H_q$ as to one occurring in $G_q$, then removing $X?$ from $H_q$ and adding the variables of $T$, each with the type of its position in $T$, and the assignment, shown; Reduce, Cold-call and every other Communicate leave the held variables unchanged.
\begin{description}
\item[Ask $p$.]  $A$ is an $n$-ary goal of a volitional procedure $q$ with interactive type $T$, $X$ a writer occurring nowhere in the configuration, the \temph{interactive variable} of the asked goal, of type $T'$ where $T$ is $T'$ or $T'?$; the goal takes its end of $X$ and the person the other: $(A', Y) = (q(S_1,\ldots,S_n,X?), X)$ where $T$ is in reader mode, and $(q(S_1,\ldots,S_n,X), X?)$ where it is in writer mode.  The asked goal keeps the identifier of $A$.
\[
\frac{A = q(S_1,\ldots,S_n) \in G_p \qquad X \text{ fresh} \qquad (A', Y) \text{ as above}}
     {(H_p,\, G_p,\, \sigma_p) \;\rightarrow\; (H_p \cup \{Y\},\; G_p \setminus \{A\} \cup \{A'\},\; \sigma_p)}
\]
\item[Answer $p$.]  $X \in H_p$ is a writer whose reader $X?$ occurs in the configuration, and $t$ a term of the type of $X$, ground but for variables occurring nowhere in the configuration: a reader $Y_i?$ at each position the person writes of type \verb|Stream(|$T'$\verb|)| or \verb|List(|$T'$\verb|)|, unless the person ends the list there with \verb|[]|, and a writer $W_j$ at each outermost position the program writes; where $X$ itself is of such a type, $t$ is \verb|[|$e$\verb#|#$Y_1?$\verb|]| with $e$ such a term of type $T'$, or \verb|[]| for a list.  The person holds $Y_i$ and $W_j?$, each with the type of its position in $t$.
\[
\frac{X \in H_p \qquad X? \text{ occurs} \qquad t \text{ as above}}
     {(H_p,\, G_p,\, \sigma_p) \;\rightarrow\; (H_p \setminus \{X\} \cup \{Y_1,\ldots,Y_k,W_1?,\ldots,W_m?\},\; G_p,\; \sigma_p \cup \{X? := t\})}
\]
\item[Dismiss $p$.]  $X \in H_p$ is a writer whose reader $X?$ occurs nowhere in the configuration.
\[
\frac{X \in H_p \qquad X? \text{ occurs nowhere}}
     {(H_p,\, G_p,\, \sigma_p) \;\rightarrow\; (H_p \setminus \{X\},\; G_p,\; \sigma_p)}
\]
\end{description}
${\sim}$ is the partial equivalence whose domain is the Ask, Reduce, Communicate and Cold-call transactions, relating two Reduce, Communicate or Cold-call transactions as maGLP does, the person counting as a goal, and $t_1 \sim t_2$ iff both are Ask transactions at the same agent $p$ asking the same goal (by identity, Definition ``Goal Identity'' of~\cite{shapiro2026volition}).
\end{definition}

\begin{definition}[Proper Run~\cite{shapiro2026volition}, adapted from~\cite{shapiro2025glp}]\label{def:rc-proper}
A vmaGLP run is \temph{proper} if, at each agent $p$, a variable that occurs in $G_p$ after a transition but not before it also does not occur in $G_p$ or in $H_p$ at any earlier configuration, and the clause each Reduce renames is renamed apart from every agent's goals, held variables and readers substitution, in the configuration it reduces and in every earlier one.
\end{definition}

\begin{definition}[Question, Open, Closed, Volition~\cite{shapiro2026volition}]\label{def:rc-question}
In a configuration of the vmaGLP transition system, a writer $X \in H_p$ is a \temph{question} of $p$'s person: an interactive variable the person writes, introduced by an Ask, a writer inside the program's output delivered to the person, or the writer of a stream or list that an Answer leaves to the person, the tail of the stream or list answered or one inside the answer; it is \temph{open} if its reader $X?$ occurs in the configuration, and \temph{closed} otherwise.  A \temph{volition} of $p$'s person is an Answer transition of $p$: the person's answer to a question.
\end{definition}

\begin{theorem}[Volitional Soundness~\cite{shapiro2026volition}]\label{thm:rc-vs}
Let $c^1 \rightarrow c^2 \rightarrow \cdots$ be a proper safe run of the vmaGLP transition system over $P$ and $M$, $p \in P$, and $X$ a question of $p$'s person in $c^i$.  Then the writer $X$ occurs in $c^i$ in $H_p$ and nowhere else, and a transition $c^i \rightarrow c^{i+1}$ that adds an assignment $X? := t$ to a readers substitution is the Answer of $p$ to $X$.
\end{theorem}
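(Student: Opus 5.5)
The plan is an induction on $i$ along the proper safe run $c^1 \rightarrow c^2 \rightarrow \cdots$. The hypothesis strengthens the first claim to a global single-occurrence invariant $\mathcal{I}(c^i)$, stated as follows. Across the whole configuration (every $G_q$, every $\sigma_q$ and every $H_q$), each writer occurs at most once. Moreover, a writer in some $H_p$ occurs nowhere else.

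The first claim of the theorem is the second half of $\mathcal{I}$. The second claim will follow from $\mathcal{I}(c^i)$ by inspecting which rules can add an assignment $X? := t$.

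\emph{Base case.} The initial configuration has $H_p = \emptyset$ and $\sigma_p = \emptyset$. Its goal \verb|agent(ch(_?,_),ch(_?,_))| satisfies SO. So $\mathcal{I}(c^1)$ holds trivially.

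\emph{Inductive step.} We go through the transactions of Definition~\ref{def:rc-vmaglp} one by one.

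\begin{itemize}
\item \textbf{Reduce.} The clause is renamed apart from every agent's goals, held variables and readers substitution, by properness (Definition~\ref{def:rc-proper}), and it satisfies SRSW. The writer mgu therefore binds only writers that occur in the reduced goal $A$. By $\mathcal{I}(c^i)$ no such writer is in any $H_p$. Every writer of the body occurs once in $(G_p\setminus\{A\}\cup B)\hat\sigma$ or once in $\hat\sigma?$. New variables are fresh, so they are in no $H$.
\item \textbf{Communicate.} The assignment $X? := T$ is removed from $\sigma_p$ and its term $T$ is placed at the single occurrence of $X?$, in $G_q$ or in $H_q$. Each writer of $T$ thus moves from one location to one location. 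A writer that lands in $H_q$ (a question delivered inside the program's output) occurred only inside $T$ before, so it now occurs only in $H_q$.
\item \textbf{Cold-call.} It moves a message between network streams, that is, within goals, and preserves counts.
\item \textbf{Ask.} The writer $X$ is fresh. One end goes into $H_p$ and the other end into the goal. If $X$ itself goes to $H_p$ (writer mode), only $X?$ is in $G_p$.
\item \textbf{Answer.} $X$ leaves $H_p$. The term $t$ contains only variables occurring nowhere, and the person keeps $Y_i$ and $W_j?$, so the new writers $Y_i$ in $H_p$ occur nowhere else. The writers $W_j$ occur only in $t$, inside $\sigma_p$.
\item \textbf{Dismiss.} It only removes $X$ from $H_p$.
\end{itemize}

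Together with Definition~\ref{def:rc-question}, which says that every question is a writer in $H_p$ introduced by an Ask, a Communicate or an Answer, this gives the first claim.

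\emph{Second claim.} Let $X$ be a question in $c^i$. We ask which transition can add an assignment $X? := t$ to some readers substitution.
\begin{itemize}
\item A Reduce adds $\hat\sigma?$, which assigns only writers occurring in $A$ or in the renamed clause. $X$ is in neither: it is not in $A$ by the first claim, and not in the clause by properness.
\item Communicate and Cold-call add no assignments. Communicate only removes one from $\sigma_p$ and applies it to goals.
\item Ask and Dismiss add none.
\end{itemize}
Only Answer remains. Its premise $X\in H_p$ fixes the agent, because $X$ occurs in no other $H_q$, so the transition is the Answer of $p$ to $X$.

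I expect the Communicate case to be the main obstacle. The reason is the writers embedded in the delivered term $T$. To know that they occurred only inside that one assignment, we need SO for the whole configuration, including the terms in the readers substitutions, and not just for goals. That is why I would state $\mathcal{I}$ globally from the start and check it for Reduce via the SRSW restriction on clauses. The detailed justification is the relationship between a writer in the body and its reader binding in $\hat\sigma?$ (Definition ``Reduction'' of~\cite{shapiro2026volition}).
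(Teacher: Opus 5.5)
The paper does not prove this theorem. It recalls the result from the vGLP paper, so there is no in-paper proof to compare with. Your route is the natural one: a single-occurrence invariant checked rule by rule, then the observation that only an Answer can bind a held writer. As you state it, though, the induction does not close.

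Your $\mathcal{I}$ constrains only writers, yet your Communicate case uses ``the single occurrence of $X?$''. The rule produces $G_q\{X? := T\}$, which replaces \emph{every} occurrence of $X?$ in $G_q$. So a reader occurring twice there would duplicate every writer of $T$. Writer uniqueness after a step therefore needs reader uniqueness before it, and $\mathcal{I}$ does not supply it. Your closing remark asks for SO of the whole configuration, which is right, but that is not what $\mathcal{I}$ says.

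Strengthen $\mathcal{I}$ to: every variable, writer or reader, occurs at most once among the goals, the held variables and the terms of the pending assignments, not counting the readers that key the readers substitutions. The clause about held writers is then a consequence. You must then recheck every rule for readers as well:
\begin{itemize}
\item Ask and Answer split fresh pairs between $H_p$ and the rest of the configuration.
\item A Communicate to the person moves the readers of $T$ into $H_q$.
\item Reduce preserves the invariant by SRSW and renaming apart.
\end{itemize}
The count must also leave out the shown assignments. A Communicate to the person puts $X? := T$, shown, into $H_q$ beside the writers of $T$ it adds as held. Read literally, your $\mathcal{I}$ fails at the first such step.

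Two further repairs are needed.

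First, in the second claim, Cold-call is not assignment-free. Its result $(G'_q,\sigma'_q)$ extends $q$'s network input stream, which binds that stream's tail. You must argue that the writer it binds is never held by a person. This is true, since writers enter an $H$ only fresh (by Ask or Answer) or inside a term delivered to the person.

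Second, in Reduce you must fix how $\sigma_p\circ\hat\sigma?$ is read. Note also that the writer mgu binds the writers of the renamed head, not only those of $A$; this is harmless, since they are fresh, as your second claim says. The problem arises if composition instantiates the terms of assignments already pending in $\sigma_p$. Take a pending $X? := f(Y?)$ followed by a reduction binding $Y := k(Z)$. The writer $Z$ then sits both in $X? := f(k(Z))$ and in $Y? := k(Z)$, whose reader now occurs nowhere, and your count over all readers substitutions fails. Either read $\circ$ as adding the assignments of $\hat\sigma?$, or count only the pending assignments whose reader still occurs, which are the only ones a Communicate can ever deliver.
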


The proof is in~\cite{shapiro2026volition}.

\begin{definition}[Person Channel, Person Writer, GLP with Persons, Grant~\cite{shapiro2026volition}]\label{def:rc-persons}
Let $N$ be a GLP program and $P\subset\Pi$.  A \temph{person channel} of a goal $G_0$ satisfying SO is a variable pair $(U, U?)$ such that $U?$ occurs in $G_0$ and $U$ does not; $U$ is its \temph{person writer}.  \temph{GLP with persons} over $P$ and $N$, with a designated person channel of each agent's initial resolvent, is the transition system obtained from the maGLP transition system over $P$ and $N$ (Definition~\ref{def:rc-glp}) by adding, at each agent $p$, \temph{grant} transitions
\[
(G_p, \sigma_p) \;\rightarrow\; (G_p,\; \sigma_p \circ \{U? := [c\,|\,U_1?]\})
\]
where $c$ is a ground term \verb|input(|$\mathit{Id}$\verb|, |$P$\verb|, |$R$\verb|)|, $U$ the person writer of $p$, and $U_1$ a fresh variable, the person writer after the transition, every other agent stationary; the partial equivalence is the maGLP transaction equivalence, with grant transitions outside its domain.
\end{definition}

\begin{definition}[Canonical Compilation~\cite{shapiro2026volition}]\label{def:rc-compilation}
The \temph{canonical compilation} $\lceil M\rceil$ of a vGLP program $M$ is the GLP program defined as follows.  A procedure of $M$ \temph{reaches a question} if it is volitional or a clause of it calls a procedure that does; each such procedure has one more argument, last, its \temph{ask stream}, which its goal writes and the dispatcher reads.  In a clause of such a procedure, each call in the body of a procedure that reaches a question is given a fresh writer as its ask stream, and the head carries $D?$, where $D$ is the clause's ask stream, with the body's ask streams merged into $D$: the call is given $D$ itself where there is one such call, \verb|merge| goals merge them where there are more, and the head carries \verb|[]| in place of $D?$ where there is none.  Then $\lceil M\rceil$ consists of: every ordinary clause of $M$, so extended; for each volitional procedure $q$ of $M$ with interactive type $T$, its clauses as guarded clauses of arity $n{+}1$ under a fresh name, \verb|q1| below, extended, and the asking clause, under the name $q$ so that every call of $q$ reaches it unchanged,
\begin{verbatim}
q(S1, ..., Sn, [ask(T, t(X))|D?]) :- q1(S1', ..., Sn', X?, D).
\end{verbatim}
where $S'_l$ passes the argument on --- at an input position the head carries $S_l$ and $S'_l$ is its reader $S_l?$, and at an output position the head carries $S_l?$ and $S'_l$ is the writer $S_l$ --- \verb|X| and \verb|X?| are exchanged where $T$ is in writer mode, and \verb|t| is the name of $T$ in lowercase, its type arguments appended, with \verb|_w| for writer mode and \verb|_r| for reader mode --- \verb|card_w|, \verb|stream_request_r| --- so that the asks of $M$ are of one type, the union over those functors; for each interactive type $T$ of $M$, the \temph{construct process} of $T$, which the dispatcher spawns on an ask with its variable and its channels to the dispatcher, and which acts as above (Definition ``Construct Process'' of~\cite{shapiro2026volition}); and the \temph{dispatcher}, which reads the ask stream of the initial goal and the person channel, spawns a construct process with a fresh identifier on each ask, and routes the person's grants to the construct processes by identifier, by the standard stream techniques (\cite{shapiro2026volition} prints it).  The initial goal of $\lceil M\rceil$ at each agent is that of $M$, extended, and the dispatcher on the ask stream of that goal and on the person channel, which is its person side.  No vGLP module imports from another module a procedure that reaches a question, so the questions of a program are those of one module; the GLP module that forms the program's initial goal imports that module's compiled procedures, ask streams included, and the dispatcher.  A \texttt{.vglp} source in the directory tree of a program is compiled into a module of that program, as a \texttt{.glp} source is.  The compilation is typed: $\lceil M\rceil$ carries the type definitions and procedure declarations of $M$, each procedure that reaches a question declared with its ask stream added, a volitional procedure's declaration being its asking clause's and its guarded clauses declared under their fresh name with the argument of its interactive type added in the mode of that type, the type of the asks, and the declarations of the construct processes and the dispatcher.  A vGLP program declares the type of its output to its person by defining the type \verb|Output|, a reserved name; the dispatcher draws that output as a thread whose elements are drawn by the widget of \verb|Output|.
\end{definition}

\begin{definition}[Compilation Mapping~\cite{shapiro2026volition}]\label{def:rc-sigma}
Given a configuration of GLP with persons over $P$ and $\lceil M\rceil$, the mapping $\sigma$ yields the vmaGLP configuration over $P$ and $M$ with, at each agent:
\begin{itemize}
\item $G$ = the multiset of the agent's goals, each read back to its source with its ask stream dropped: $q(S_1,\ldots,S_n,D)$ as the $n$-ary goal $q(S_1,\ldots,S_n)$, $q_1(S_1,\ldots,S_n,X?,D)$ and $q_1(S_1,\ldots,S_n,X,D)$ as the asked goal, with the identifier of the goal its asking clause reduced, and any other compiled goal as itself; the construct processes, the dispatcher and the \verb|merge| goals the compilation adds contributing nothing; and in every goal an asking clause's pair for an argument read as that argument: at an input position $S_l?$ as the term $u$, until the assignment is communicated, and at an output position $S_l$ as $W$ and $S_l?$ as $W?$;
\item $H$ = the variables held for the person, by the construct processes and by the asks the dispatcher has not yet taken, each with its type: the writers of the positions the person writes and the readers of the positions the program writes; and the assignments communicated to the construct processes, shown;
\item $\sigma_r$ = the restriction of the accumulated readers substitution to the readers of \temph{source variables} --- the variables of the initial goal, of the compiled clauses of $M$ as renamed, but for their ask streams, the interactive variables, and the variables of the terms construct processes bind questions with --- less the assignments the asking clauses' reductions make, but for the assignment $W? := S_l?$ of an output position, which it reads through --- absent while $S_l?$ is unassigned, and $W? := u$ once $S_l? := u$ is --- and less the assignments by which construct processes bind closed questions.
\end{itemize}
\end{definition}

\begin{lemma}[Compilation, per Transition~\cite{shapiro2026volition}]\label{lem:rc-correspondence}
Let $\sigma$ be the mapping of Definition~\ref{def:rc-sigma} and $c \rightarrow c'$ a transition of GLP with persons over $P$ and $\lceil M\rceil$ at agent $p$.  A grant, a reduction of the dispatcher or of a \verb|merge| goal the compilation adds, a reduction of a construct process that binds nothing, and a Communicate on the person channel, on an ask stream, or between the dispatcher and the construct processes are stutters.  A construct process binding a question whose reader occurs nowhere maps to the Dismiss of it by $p$.  A reduction of the asking clause of $q$ maps to the Ask of the source goal.  A Communicate of an assignment an asking clause's reduction made is a stutter.  A construct process binding an open question $X$ with the term $t$ its person's input forms, a term of the type of $X$ ground at the positions the person writes but for the stream and list positions, which hold readers of new writers it holds or, for a list the person ends, \verb|[]|, with new writers at the positions the program writes, whose readers it holds, maps to the Answer of $p$ to $X$ with $t$.  A reduction of a compiled clause of $M$ maps to the Reduce of the source goal with the source clause.  A Communicate, within the agent or from another, maps to the source Communicate of the same assignment, to the person where its reader is held for the person, and a Cold-call to the source Cold-call.
\end{lemma}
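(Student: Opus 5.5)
The plan is a case analysis on the transition $c\rightarrow c'$ of GLP with persons over $\lceil M\rceil$ at $p$, sorted by which goal reduces or which assignment is delivered and to whom. In each case I would compute the three components of $\sigma(c)$ and $\sigma(c')$ from Definition~\ref{def:rc-sigma}: the read-back goals $G$, the held variables and shown assignments $H$, and the source readers substitution $\sigma_r$. I would then check that they coincide (a stutter), or that they differ exactly as the named vmaGLP transaction of Definition~\ref{def:rc-vmaglp} prescribes.

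Throughout I would use two standing facts. First, every clause of $\lceil M\rceil$ is SRSW, and the run is proper, so the variables a reduction introduces are fresh. Second, the dispatcher, the construct processes and the \verb|merge| goals contribute nothing to $G$. Their only contribution to $H$ is the variables held for the person.

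\emph{The stutters.} A grant only extends the person channel, which no component of $\sigma$ reads. A reduction of the dispatcher or of a \verb|merge| goal only moves asks between streams, or spawns a construct process on an ask. In the latter case the variables move from "asks the dispatcher has not taken" to "held by construct processes", so $H$ is unchanged. A construct process reduction that binds nothing only rearranges what it holds. The Communicates on the person channel, on ask streams, or between dispatcher and construct processes deliver assignments to readers outside the source variables. The one exception is an assignment already shown, which remains held. A Communicate of an assignment made by an asking clause substitutes $u$ for $S_l?$, which $G$ already read as $u$.

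\emph{The asking clause.} Its reduction replaces $q(S_1,\ldots,S_n,D)$ by $q_1(S'_1,\ldots,S'_n,X^{(?)},D')$ with $X$ fresh. The read-back therefore replaces the $n$-ary goal by the asked goal, keeping its identifier. The other end of $X$ sits in an untaken ask, so it enters $H$. Since $D$ is an ask-stream variable and the pair assignments are read through, $\sigma_r$ is unchanged: the Ask. Freshness of $X$ comes from properness.

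\emph{Construct-process bindings.} If the bound question's reader occurs nowhere, the binding is excluded from $\sigma_r$ by definition, and the writer leaves $H$: the Dismiss. If the question is open, $\sigma_r$ gains $X?:=t$, and $H$ loses $X$ and gains the fresh readers and writers of $t$. I must check that $t$ meets the Answer premise: it is typed, ground at the person's positions but for stream and list tails or \verb|[]|, and its variables are fresh. This follows from what the construct process forms.

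\emph{Compiled clauses and the remaining Communicates.} A compiled clause of $M$ reduces exactly as its source clause does once ask streams are dropped and \verb|merge| goals are ignored. The head's $D?$ and the body's fresh ask-stream writers are invisible to the read-back, and the source clause is the first to succeed because the compilation preserves clause order and guards. Hence $G$ and $\sigma_r$ change as under the source Reduce. A remaining Communicate delivers an assignment to a source reader. If that reader is in a goal, this is the source Communicate. If it is held by a construct process, this is the Communicate to the person, adding the shown assignment and the variables of $T$ to $H$. A Cold-call only touches the network streams, which are source arguments.

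I expect the main obstacle to be the compiled-clause case. One must show that the reduction of $\lceil M\rceil$'s clause on a goal carrying pair-passed arguments $S_l?$, whose assignments may still be in transit, matches the source reduction read through Definition~\ref{def:rc-sigma}. This includes the read-through of output assignments $W?:=S_l?$, which appear in $\sigma_r$ only once $S_l?$ is assigned. I would handle it by an invariant, proved along the run: the read-back of every compiled goal equals its source goal with each pair replaced by its current reading. First-clause selection is unaffected because a clause that inspects an argument suspends on $S_l?$ exactly where the source goal would suspend on the unread $u$. The rule that a volitional procedure's read arguments are read by all later clauses keeps this order-respecting.
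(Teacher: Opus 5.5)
\textbf{Verdict: there is a gap, in the step you yourself single out --- first-clause selection in the compiled-clause case.} Your case split is the right one, and it is the shape the paper points to. The paper gives no proof of its own here: it calls the lemma the safety case of vGLP's Compilation theorem taken transition by transition, and its proof of Theorem~\ref{thm:soundness} has exactly this form.

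\textbf{The asked goal.} As stated, your reason is not right even here. The source asked goal holds $u_l$ itself and does not suspend on it. The goal of $q_1$, by contrast, suspends on every clause that reads an argument whose pair assignment $S_l? := u_l$ is still pending. So the compiled goal suspends on strictly more clauses than its read-back. What the ordering rule actually gives is the contrapositive:
\begin{itemize}
\item if the goal of $q_1$ reduces with clause $C_k$, then $C_k$ reads no argument still in transit, or it would suspend;
\item by the rule, no earlier clause reads one either;
\item so $C_1,\ldots,C_{k-1}$ fail or suspend on the compiled goal exactly as on the source goal;
\item hence $C_k$ is the first to succeed on both, with the same result read through the pairs.
\end{itemize}

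\textbf{The missing case: goals descended from $q_1$.} The argument above stops at the asked goal, and your invariant does not cover what lies beyond it.
\begin{itemize}
\item A clause of $q_1$ may leave an input argument unread; the ordering rule is then vacuous for it.
\item Its head writer $V$ is bound to $S_l?$, and its body holds $V?$.
\item Once $V? := S_l?$ is communicated (by the lemma, the source Communicate of $V? := u_l$), a goal of an \emph{ordinary} procedure holds $S_l?$. That goal reads back as holding $u_l$, while $S_l? := u_l$ is still pending and its eventual delivery is a stutter.
\item Reduce takes the first clause that succeeds, skipping suspended ones, and ordinary procedures carry no ordering rule.
\item Suppose the first clause reads this argument and the second reads only an argument already delivered. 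Then the compiled goal can reduce with the second clause, while its read-back, holding $u_l$, must reduce with the first whenever $u_l$ matches it.
\end{itemize}
So \emph{the compilation preserves clause order and guards} does not give you the source Reduce for such goals. You must do one of two things:
\begin{itemize}
\item show that no goal but the asked one ever holds an input pair's reader with its assignment pending. This would need the composition $\sigma_p\circ\hat\sigma?$ in the Reduce rule to resolve $V? := S_l?$ through the pending $S_l? := u_l$, which you do not establish; or
\item handle such goals explicitly.
\end{itemize}
The paper defers this proof and gives you nothing to lean on here.

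\textbf{A smaller omission: the Dismiss case.} You must also show that $H$ gains none of the variables of the binding term, only $X$ leaving it. A construct process cannot tell a closed question from an open one, and in the Answer case it keeps the other ends of the term's fresh variables. The paper makes the corresponding check for its relay step in case~(4) of the proof of Theorem~\ref{thm:soundness}, and there it needs the restriction of $I_a$ to terms assigned to open writers.
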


The lemma is the safety case of the proof of the Compilation theorem of~\cite{shapiro2026volition}, stated transition by transition.

\end{document}